\def\@settitle{\begin{center}%
		\baselineskip14\p@\relax
		\normalfont\LARGE\scshape\bfseries
		\@title
	\end{center}%
}
\def\subsection{\@startsection{subsection}{2}%
	\z@{.5\linespacing\@plus.7\linespacing}{.5\linespacing}%
	{\normalfont\large\bfseries}}
\definecolor{darkblue}{rgb}{0.0, 0.0, 0.45}
\date{\today}
\newcommand{\rme}{{\rm e}}
\newcommand{\mPr}{\mathbf{Pr}}
\newcommand{\mE}{\mathbf{E}}
\newcommand{\mfq}{\mathfrak{q}}
\newcommand{\Fij}{\mathds{F}_{ij}}
\newcommand{\Fjj}{\mathds{F}_{jj}}
\newcommand{\Sij}{\mathsf{S}_{ij}}
\newcommand{\Sih}{\mathsf{S}_{ih}}
\newcommand{\hsigma}{\hat{\sigma}}
\newcommand{\T}{\mathds{T}}
\newcommand{\F}{\mathds{F}}
\newcommand{\N}{\mathbb{N}}
\newcommand{\Ht}{\mathcal{H}_2}
\newcommand{\X}{{\mathcal{X}}}
\newcommand{\Nmodeset}{\mathcal{N}}
\newcommand{\R}{\mathbb{R}}
\newcommand{\B}{\mathbb{B}}
\newcommand{\EZ}[1][]{e}
\newcommand{\fEZ}[1][]{e}
\newtheorem{theorem}{Theorem}[section]
\newtheorem{lemma}[theorem]{Lemma}
\newtheorem{proposition}[theorem]{Proposition}
\newtheorem{remark}[theorem]{Remark}
\newtheorem{assumption}[theorem]{Assumption}
\numberwithin{equation}{section}   
\begin{document}

\title{Multimode Diagnosis for Switched Affine Systems with Noisy Measurement}

\author{Jingwei Dong, Arman Sharifi Kolarijani and Peyman Mohajerin Esfahani}
\thanks{The authors are with the Delft Center for Systems and Control, Delft University of Technology, The Netherlands (\{J.Dong-6, P.MohajerinEsfahani\}@tudelft.nl, Arman.Sh.Kolarigani@gmail.com). This work is supported by the ERC grant TRUST-949796 and CSC (China Scholarship Council) with funding number: 201806120015.}
\maketitle

\begin{abstract}
We study a diagnosis scheme to reliably detect the active mode of discrete-time, switched affine systems in the presence of measurement noise and asynchronous switching. The proposed scheme consists of two parts: (i) the construction of a bank of filters, and (ii) the introduction of a residual/threshold-based diagnosis rule. 
We develop an exact finite optimization-based framework to numerically solve an optimal bank of filters in which the contribution of measurement noise to the residual is minimized. The design problem is safely approximated through linear matrix inequalities and thus becomes tractable.
We further propose a thresholding policy along with probabilistic false-alarm guarantees to estimate the active system mode in real-time. In comparison with the existing results, the guarantees improve from a polynomial dependency in the probability of false alarm to a logarithmic form. This improvement is achieved under the additional assumption of sub-Gaussianity, which is expected in many applications. The performance of the proposed approach is validated through a numerical example and an application of the building radiant system. 
\end{abstract}

\section{Introduction}
Over the last two decades, special attention has been paid to switched affine systems because they can be used to effectively model a wide range of practical systems, such as chemical plants~\cite{venkatasubramanian2003review}, aeronautic systems~\cite{zolghadri2012advanced} and smart buildings~\cite{weimer2013parameter}. 
These systems are usually difficult to be exactly described by a single model because of their nonlinear and complex dynamic characteristics. 
Research on switched systems is mainly focused on model identification~\cite{bako2011identification,ohlsson2013identification}, state estimation~\cite{ackerson1970state}, stability analysis and controller design~\cite{lin2009stability,yuan2018novel}.  
The prior knowledge of the switching signal that indicates the evolution of modes is crucial to theoretical results in these research topics.  
For example, a general approach to controlling switched systems is to employ mode-dependent controllers, where the activation of a proper controller depends on the switching signal. There are, however, several scenarios in which the switching signal is not a priori known. In fault diagnosis scenarios, an unexpected transition from a healthy mode to a faulty mode can be treated as an unknown switching. Thus, one needs to detect the active mode of switched systems as the detection process results in a delay between the active mode and its corresponding controller.

\subsection{Literature review}
The problem of mode detection for switched affine systems has been studied for decades. 
The proposed approaches can be grouped into two categories: data-based and model-based approaches.
The data-based approaches are most adopted when the parameters of each mode are unknown. In that case, the parameters need to be identified from a collection of input-output data. Then the new data is associated with the most suitable mode through data classification techniques.  
A number of results on data-based approaches have been achieved. We refer the interested readers to~\cite{bako2011identification} and the references therein.    

{\em Model-based fault diagnosis:} In model-based approaches, one leverages tools from the fault detection and isolation (FDI) field to detect and isolate changes caused by switches or faults.
The most widely used FDI methods are based on residual generation, where certain residual signals are generated by observer-based or parity space methods to characterize the occurrence of changes quantitatively~\cite{gao2015survey}. 
Beard~\cite{beard1971failure} proposes the original observer-based diagnosis approach to replace the hardware redundancy in 1971. Subsequently, many observer-based diagnosis approaches are developed. To deal with disturbances or measurement noise, the authors in~\cite{henry2005design} construct an optimization problem to design the parameters of the observer, in which the influence of disturbances on residuals characterized by~$\mathcal{H}_{\infty}$-norm is minimized.
The parity space approach is proposed in~\cite{chow1984analytical}, which generates residuals to check the consistency between the model and the measurements. 
It is worth noting that the derived residual generators usually have the same order as that of the systems. This makes the generators complex and computationally demanding when dealing with high-dimensional or large-scale systems.
Frisk~\cite{frisk2001minimal} proposes a parity-space-like approach in a polynomial framework which produces residual generators with possibly low order. In their following work~\cite{nyberg2006residual}, the previous approach is extended to the linear differential-algebraic equation (DAE, difference-algebraic equation in the discrete-time case). This extension enlarges the application range of FDI approaches because DAE models cover several classes of models, e.g., transfer functions, state-space models, or descriptor models.
The above approaches are for linear systems. 
For the fault detection of nonlinear systems, a natural way is to linearize nonlinear systems at local operating points and decouple the disturbances together with the higher-order terms from the residuals, see for example~\cite{seliger1991fault,benosman2010survey}. Another method is to develop adaptive nonlinear estimators to approximate the nonlinear terms \cite{boem2011distributed,ferrari2011distributed}.
More recently, the authors in~\cite{esfahani2015tractable,pan2021dynamic} develop tractable optimization-based approaches in the DAE framework to design FDI filters to deal with disturbances and nonlinear terms. 

{\em Multi-mode diagnosis:} Note that the aforementioned approaches are applicable to systems with a single model. A bank of residuals is usually required to deal with systems consisting of several modes. Moreover, the systems need to satisfy certain rank conditions to guarantee that any two subsystems can be distinguished from each other. This is the distinguishability (also called discernibility or observability) of switched systems~\cite{halimi2014model,kusters2018switch}.
To detect the active mode, the idea that makes each residual sensitive to all but only one mode is usually adopted, which is called \textit{generalized observer scheme} (GOS)~\cite{frank1990fault}. Following a GOS mindset, results on mode detection are achieved based on basic residual generation methods, such as parity space approaches~\cite{cocquempot2004fault}, unknown input observers~\cite{wang2007adaptive}, and sliding mode observers~\cite{mincarelli2016uniformly,zhang2019sliding}. Note that the computational complexity of these residual generation methods increases significantly as the system dimension and the number of modes increase. In this work, we propose a design perspective in the DAE framework that relies on a bank of filters whose dimension does not necessarily scale up with the dimension of the system. This feature enables a possibility of low-ordered filters compared to the existing literature. 

Another class of mode detection methods is the set-membership method which computes the reachable set of each subsystem. 
Then, the output is compared to the reachable sets to determine the mode~\cite{scott2014input,marseglia2017active, harirchi2018guaranteed}.
The authors in~\cite{scott2014input} and~\cite{marseglia2017active} develop active diagnosis approaches in which an optimal separating input sequence is designed to guarantee that output sets of different subsystems are separated.
In~\cite{harirchi2018guaranteed}, a model invalidation approach is proposed to compare the input-output data to the nominal behaviors of the system, where the set-membership check is reduced to the feasibility of a mixed-integer linear programming problem.
The set-membership methods are generally computationally demanding because they require solving optimization problems at each step. 
Also, the residual generation and set-membership methods mentioned above either neglect the noise or treat them as robust only through the support information. This viewpoint often leads to conservative diagnosis guarantees. In fact, the measurement noise introduces a unique challenge to the detection task where the reachable sets of healthy residuals may well overlap with the faulty ones. This challenge is one of the focus points of this study.

\subsection{Main contributions}
In the light of the literature reviewed above, the main message of this paper revolves around a diagnosis scheme to detect the active mode of asynchronously switched affine systems in real-time. The diagnosis scheme consists of a bank of filters and a residual/threshold-based diagnosis rule. The bank of filters comprises as many filters as the admissible mode transitions, while the diagnosis rule prescribes conditions under which we estimate the transition based on the behaviors of the residuals. The main contributions of this paper are summarized as follows.
\begin{itemize}
\item {\bf Exact characterization of an optimal bank of filters:}
Building on residual-based detection and $\Ht$-norm approaches in the DAE framework, we formulate the optimal bank of filters design problem as a finite optimization problem in which the objective is the noise contribution to the residuals~(Theorem~\ref{thm:minimization problem}). We also provide necessary and sufficient conditions that ensure the feasibility of the resulting optimization problem~(Proposition~\ref{prop:feasibility}). 

\item {\bf Tractable convex restriction:}
We provide an LMI-based sufficient condition for the nonlinear constraint in the exact optimization problem of the filters design, leading to a tractable approximation of the original problem~(Proposition~\ref{prop:LMI Appro}). 

\item {\bf Probabilistic performance bounds:}
We further propose diagnosis thresholds along with probabilistic false-alarm guarantees to estimate the active system mode~(Theorem~\ref{thm:Prob certificate}). The proposed bound admits a logarithmic dependency with respect to the desired reliability level, which is better than the polynomial rate in the existing works~\cite{boem2018plug}. This improvement comes under the sub-Gaussianity assumption on the noise distribution, a regularity requirement expected to hold in many real-world applications.
\end{itemize}

The rest of the paper is organized as follows. The problem formulation and the proposed architecture of the diagnosis scheme are introduced in Section~\ref{sec:problem description}. In Section~\ref{sec:main results}, we present an optimization-based approach to design the filters along with some performance analysis of the proposed scheme. 
To improve the flow of the paper and its accessibility, we postpone all technical proofs to Section~\ref{sec:proofs}.
The proposed scheme is applied to a numerical example and a building radiant system in Section~\ref{sec:simulation} to validate its effectiveness.
Finally, Section~\ref{sec:conslusion} concludes the paper with some remarks and future directions.

\paragraph{\bf Notation} 
Sets~$\R~(\R_{+})$ and~$\N~(\N_{+})$ denote all reals (positive reals) and non-negative (positive) integers, resp. 
Set~$\{1,\dots,n\}$ is denoted by~$\Nmodeset$.
Sets of symmetric matrices and non-singular matrices are denoted by~$\mathcal{S}$ and~$\mathcal{M}$, resp.
In symmetric matrices, we use~$\ast$ for the off-diagonal elements in an attempt to avoid clutter. The identity matrix with an appropriate dimension is denoted by~$I$.
The maximum singular value of a matrix~$A$ is denoted by~$\Vert A \Vert_2$.
For a vector~$v=[v_1,\dots,v_n]$, the~$2$-norm and infinity-norm of~$v$ are~$\|v\|_2 = \sqrt{\sum^n_{i=1} v_i^2}$ and~$\|v\|_\infty = \max_{i\in\{1,\dots,n\}}|v_i|$, resp.
For a random variable~$\chi$, the probability law and expected value are denoted by~$\mPr[\chi]$ and~$\mE[\chi]$, resp.
Given a signal~$s = \{s(k)\}_{k \in \N}$ and a LTI system (or transfer function)~$\T$, the notation $\T [s]$ denotes the output of the system in response to the input signal $s$. The composition of two transfer functions $\T$ and $\F$ is also denoted by $\F\T[s] =\F [\T[s]]$. We use the shorthand notation~$\| \T \|_{\Ht}$ to denote the~$\Ht$-norm of~$\T$. The steady-state gain of $\T$ is denoted by $[\T]_{{\rm ss}} := \lim_{k \to \infty}\T[1](k)$, whenever the limit exists.

\section{Model Description and Problem Statement}\label{sec:problem description}
In this section, a formal description of discrete-time asynchronously switched affine systems is given.  Then we present the architecture of the proposed mode detector and formulate the problems studied in this work. 

\subsection{Model description}
Consider a discrete-time switched affine system that consists of~$n$ subsystems:
\begin{align}\label{eq:system}
	x(k+1) &= A_{\sigma(k)} x(k) + B_{\sigma(k)} u(k) + E_{\sigma(k)} d(k) + W_{\sigma(k)} \omega(k), \notag\\
	y(k)   &= C_{\sigma(k)} x(k) + D_{\sigma(k)} \omega(k)
\end{align}
where~$x(k) \in \R^{n_x}$,~$u(k) \in \R^{n_u}$ and~$y(k) \in \R^{n_y}$ are the state, control input and output, resp. 
The signal~$d(k) \in \R^{n_d}$ and $\omega(k) \in \R^{n_\omega}$ represent the reference and noise signals, resp. For simplicity of analysis, we consider a one-dimensional reference signal, i.e.,~$n_d = 1$. Throughout this study, the noise~$\omega(k)$ is assumed to be independent and identically distributed (iid). The switching law~$\sigma(k) \in \Nmodeset$ indicates the active mode at each instant~$k$. Matrices~$A_{\sigma(k)},~B_{\sigma(k)},~E_{\sigma(k)},~W_{\sigma(k)},~C_{\sigma(k)}$ and~$D_{\sigma(k)}$ are all known with appropriate dimensions, and matrices~$E_i \neq 0$. For each mode~$i \in \Nmodeset$, we consider the static output-feedback controller 
\begin{align}\label{eq:controller}
    u(k) = K_i y(k),
\end{align}
where~$K_i$ is a constant controller gain; see \cite{chang2013new} for a design approach to~$K_i$. Let~$\{t_0,t_1,\dots,t_s,\dots \}$ denote the sequence of switching time instants of the system mode~$\sigma(k)$, i.e., by definition we have $\sigma(t_s-1) \neq \sigma(t_s)$.

In this study, we consider the setting that the switching law~$\sigma(k)$ and the switching instant~$t_s$ are both unknown to the controller. The main objective is to estimate the active mode~$\sigma(k)$, hereafter denoted by~$\hsigma(k)$, through the noisy measurement~$y$ in real-time. As depicted in Figure~\ref{fig:closed-loop with FDI}, our proposed scheme to accomplish this goal builds on a bank of filters where each of which is intended to detect a possible pair of~$\hsigma(k) = i, \sigma(k) = j$ for any~$i,j\in \Nmodeset$; we use the notation~$\Sij$ to represent this status of the closed-loop system. For each pair~$(i,j)$, the filter is assumed to be a linear time-invariant (LTI) system (or transfer function) denoted by~$\Fij$ whose output (also called residual) is a scalar-valued signal~$r_{ij} := \Fij [y]$. We note that in our setting, the current controller mode~$\hsigma(k)=i$ is always known, whereas the system mode~$\sigma(k)$ is unknown and the object of interest. Suppose that the system transitions to the status~$\Sih$ at~$t_s$ (i.e.,~$\hsigma(t_s) = i, \sigma(t_s) = h$), thanks to the linearity of the dynamics, the residual $r_{ij}$ can be written as \begin{align}\label{eq:Design form Fij}
    r_{ij} =  \underbrace{\Fij\T^{\Sih}_{dy}}_{d \mapsto r_{ij}}[d] + \underbrace{\Fij\T^{\Sih}_{\omega y}}_{\omega \mapsto r_{ij}}[\omega] + \underbrace{\mathcal{I}(x(t_s),\bar{x}_{ij}(t_s))}_{\text{initial condition}},
\end{align}
where $\T^{\Sih}_{dy}$ and $\T^{\Sih}_{\omega y}$ are the LTI systems from the external signals~$(d,\omega)$ to $y$, and $\mathcal{I}(\cdot)$ is the contribution of the internal states of the system~$x(t_s)$ and the filter~$\bar{x}_{ij}(t_s)$. From the classical system theory, we know that the initial condition contribution vanishes exponentially fast under appropriate stability conditions. To isolate the active mode, we adopt the same mindset as GOS and opt to decouple the contribution of the reference signal~$d$ (i.e., the first term in the right-hand side of~\eqref{eq:Design form Fij}) for the matched mode $j = h$, and make sure that it is significantly high when $j \neq h$. With regards to the latter, we look at the steady-state behavior of the filters, motivated by the fact that in many important applications the reference signal~$d$
is effectively constant between two switching instants. Furthermore, we opt to suppress the noise contribution (the second term in the right-hand side of~\eqref{eq:Design form Fij}) for all $h \in \Nmodeset$. These steps will be formalized in the next part.

\begin{figure}[t]
    \centering
    \begin{minipage}{0.5\textwidth}
    \centering
    \includegraphics[scale=0.7]{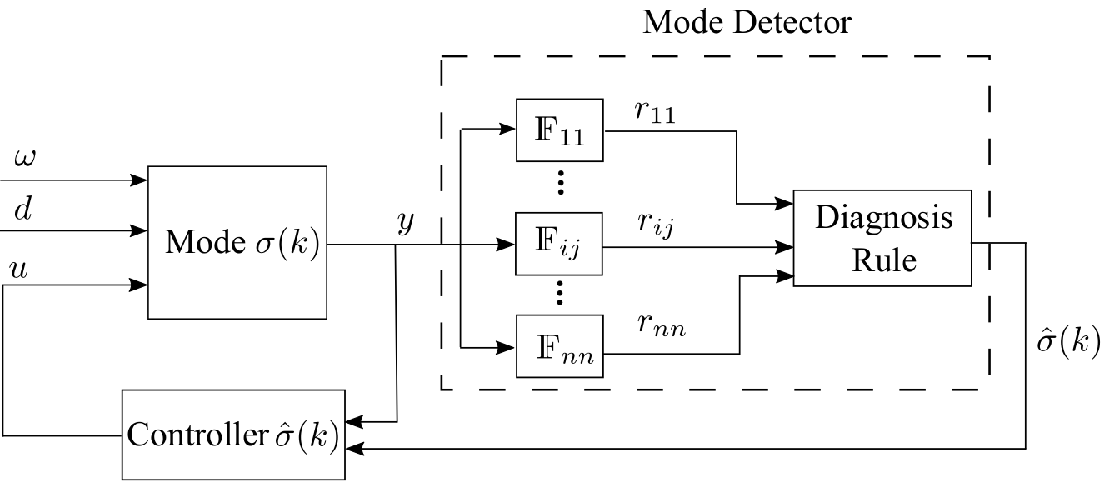} 
    \caption{Structure of the closed-loop dynamics and the mode detector}\label{fig:closed-loop with FDI}
    \end{minipage}
    \begin{minipage}{0.48\textwidth}
    \centering
    \includegraphics[scale=1.3]{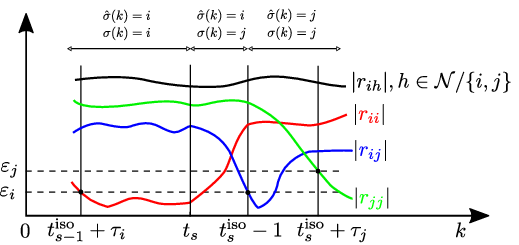} 
    \caption{Illustration of the diagnosis process}
    \label{fig:diagnosis process}
    \end{minipage}
\end{figure}

\subsection{Problem statements}
The proposed diagnosis solution comprises two components: (i)~bank of filters, as briefly described in the previous section, and (ii)~diagnosis rule, which is essentially a thresholding technique to estimate the system mode from the residuals. We then present two problems concerning each of these components. For each pair $(i,j)$ and the respective filter~$\Fij$, the desired properties of~$d \mapsto r_{ij}$ and~$\omega \mapsto r_{ij}$~(the first two terms in the right-hand side of~\eqref{eq:Design form Fij}) can be formalized as follows:
\begin{subequations} \label{eq:input-output relationships}
    \begin{align}
    \Fij \T^{\Sij}_{dy} &  = 0 ,  \label{eq:in-out relation 1} \\  
    \left| \left[\Fij \T^{\Sih}_{dy}\right]_{\rm ss} \right| & \geq 1, \qquad \forall h \in \Nmodeset \setminus \{j\}, \label{eq:in-out relation 2} \\
    \left\| \Fij \T^{\Sih}_{\omega y} \right\|_{\Ht} & \leq \eta_{ijh}, \quad \forall h \in \Nmodeset. \label{eq:in-out relation 3} 
    \end{align}
\end{subequations} 
Let us briefly elaborate on each condition in \eqref{eq:input-output relationships}: The equality constraint~\eqref{eq:in-out relation 1} decouples~$d$ from~$r_{ij}$ when the closed-loop status is $\Sij$. The condition~\eqref{eq:in-out relation 2} ensures that the absolute value of the steady-state gain of~$\Fij \T^{\Sih}_{dy}$ remains larger than or equal to~$1$, and as such, the contribution of~$d$ to~$r_{ij}$ is notably nonzero when the closed-loop status is~$\Sih$ for all~$h\in \Nmodeset\setminus\{j\}$. 
Recall that the~$\Ht$-norm of a transfer function is the asymptotic variance of the white noise response~\cite{scherer1997multiobjective}. Then, the constant~$\eta_{ijh}$ in \eqref{eq:in-out relation 3} is an upper bound for the variance of the noise contribution to the residual. In view of the desired properties~\eqref{eq:input-output relationships}, we proceed with our first problem. 

\noindent\textbf{Problem 1.} (Optimal bank of filters) Consider the closed-loop dynamics~\eqref{eq:system}-\eqref{eq:controller} and the mode detector in Figure~\ref{fig:closed-loop with FDI}. Given~$i,j \in \Nmodeset$ and an admissible family of the filters~$\Fij$, find the optimal filter defined through the optimization program 
\begin{align}\label{eq:problem1}
    \min_{\Fij,~\eta_{ijh}} \left\{\sum_{h=1}^{n} \eta_{ijh}:\eqref{eq:in-out relation 1},\eqref{eq:in-out relation 2},\eqref{eq:in-out relation 3} \right\}.
\end{align}

Given the filters as an (approximate) solution to~\eqref{eq:problem1}, we now shift our attention to the diagnosis rule component in Figure~\ref{fig:closed-loop with FDI}. Consider a transition from mode~$i$ to mode $j$ at time instant~$t_s$ (i.e.,~$\hsigma(t_s) = i, \sigma(t_s) = j$) where $i,j \in \Nmodeset$. There are two key parameters during the diagnosis process of the transition: (1)~the threshold~$\varepsilon_i \in \R_{+}$, and (2)~the waiting time~$\tau_j \in \N_{+}$. The behavior of the trajectories, as well as the design parameters~$\varepsilon_i, \tau_j$, are pictorially illustrated in Figure~\ref{fig:diagnosis process}. In the following, the role of each of the design parameters is discussed:

(1)~\textit{Threshold}~$\varepsilon_i$: 
As formalized in~\eqref{eq:input-output relationships}, the matched residual $r_{\hsigma \sigma}(k)$ should be close to zero, while the other residuals are notably away from zero. Recall that the current controller mode~$\hsigma(k)$ is known, and the system mode~$\sigma(k)$ is the detection target. Hence, we monitor the residuals~$r_{\hsigma h}(k)$ for all~$h \in \Nmodeset$, and compare them with the threshold $\varepsilon_{\hsigma}$ to isolate the matched residual (the one with the smallest absolute value). More specifically, we opt to single out one candidate from all the other possible modes. This procedure can be formally described by introducing the following conditions
\begin{subequations}\label{eq:diag rules}
    \begin{align}
        \label{eq:diag rules:j*}
        &j^*(k) = \arg\min_{h \in \Nmodeset} |r_{\hsigma h}(k)|, \\
        \label{eq:diag rules:eps}
        &|r_{\hsigma j^*}(k)| \leq  \varepsilon_{\hsigma(k)} < \min\limits_{h \in \Nmodeset \setminus \{j^*(k)\}}|r_{\hsigma h}(k)|.
    \end{align}
\end{subequations}
The mode~$j^*(k)$ defined in~\eqref{eq:diag rules:j*} is our best candidate to estimate the system mode~$\sigma(k)$, and \eqref{eq:diag rules:eps} is essentially a requirement to ensure that the threshold only selects one candidate. Once the conditions~\eqref{eq:diag rules} are fulfilled at a time instant~$k$, then the diagnosis component updates $\hsigma(k+1) = j^*(k)$, otherwise, it still retains the old mode $\hsigma(k+1) = \hsigma(k)$. In Figure~\ref{fig:diagnosis process}, note the period prior to $t^{\rm iso}_s$, the {\em isolation} time of the transition at~$t_s$; this will be formally defined in the next part in~\eqref{eq:isolation instant}.
 
(2)~\textit{Waiting time}~$\tau_j$: Once we update~$\hsigma$ at~$t^{\rm iso}_s$, the conditions~\eqref{eq:diag rules} are violated immediately since the controller mode changes. Thus, we need to wait for sufficiently large time to pass the transient behavior of the system caused by the initial condition (the third term in the right-hand side of~\eqref{eq:Design form Fij}); see the ``waiting period"~$[t^{\rm iso}_s, t^{\rm iso}_s+\tau_j)$ in Figure~\ref{fig:diagnosis process}. The controller mode~$\hsigma$ remains unchanged during this period~(i.e.,~$\hsigma(k+1) = \hsigma(k)$) until~$|r_{jj}(k)|$ reaches the respective threshold~$\varepsilon_j$; see Figure~\ref{fig:diagnosis process} and the time instant~$t^{\rm iso}_s + \tau_j$. To determine whether the diagnosis process is in the waiting period or not, we record the last isolation time instant through
\begin{align}\label{eq:isolation instant}
        t^{\rm iso}(k) := \max \Big\{t \in \N_{+}: \hsigma(t) \neq \hsigma(t-1),~k \geq t \Big\}.
\end{align}
We use the shorthand notation~$t^{\rm iso}(k) = t^{\rm iso}_s$ for~$k\in[t^{\rm iso}_s, t^{\rm iso}_{s+1})$.

In summary, the diagnosis rule of the second component can be mathematically described by
\begin{align}\label{eq:diagnosis rules}
\hsigma(k+1) &= \left\{ \begin{array}{l}
j^*(k), ~\text{if} 
     ~\eqref{eq:diag rules} ~\text{and} ~k \geq t^{\rm iso}(k)+\tau_{\hsigma (k)} \\
\hsigma (k), ~\text{otherwise}.
\end{array} \right. 
\end{align}

Note that~$\varepsilon_i$ in \eqref{eq:diag rules} and~$\tau_j$ in \eqref{eq:diagnosis rules} are the design parameters, and their objective is to detect the current system mode~$\sigma(k)$. In view of the update rule~\eqref{eq:diagnosis rules}, this objective is formalized in our next problem in terms of the behavior of the matched filter residual~$r_{ij}(k)$. 

\noindent\textbf{Problem 2.} (Probabilistic performance certificates) Suppose that the transition from mode $i$ to $j$ occurs at~$t_s$ (i.e.,~$\hsigma(t_s) = i$ and $\sigma(t_s) = j$). Given the filters constructed from Problem~1 and a reliability level~$\beta \in (0,1]$, determine the threshold~$\varepsilon_i$ and the estimated matched time~$T_{ij}$ such that
\begin{align}\label{eq:Prob guarantee}
   \mPr \left[\left|r_{ij}(t)\right| \leq \varepsilon_i  ~\Big|
   ~\begin{bmatrix} \hsigma(k)  \\ \sigma(k) \end{bmatrix}
   =\begin{bmatrix} i  \\ j \end{bmatrix},  k \geq t_s  \right] \geq 1-\beta, \quad\forall~t \geq t_s+T_{ij}.
\end{align}

The initial condition~$\mathcal{I}(x(t_s),\bar{x}_{ij}(t_s))$ determines the time that~$|r_{ij}|$ takes to reach~$\varepsilon_i$. However, the internal system state~$x(t_s)$ and switching instant~$t_s$ are unknown. Moreover, if the next transition occurs before $t_s+T_{ij}$, the guarantee in~\eqref{eq:Prob guarantee} is no longer useful. Thus, we assume that the time between two consecutive transitions (the so-called \textit{dwell time}~\cite{yuan2018novel}) is large enough so that the system reaches its steady-state before the next transition. It is a reasonable assumption as the dwell time of many real-world applications is longer than the time available for the controller to detect the mode. In this setting, the probabilistic guarantee~\eqref{eq:Prob guarantee} can be obtained, and the internal state~$x(t_s)$ can be estimated by its steady-state value.

\begin{remark}(Waiting time)
The waiting time~$\tau_j$ depicted in Figure~\ref{fig:diagnosis process} is indeed a special case of the estimated matched time introduced in Problem~2 where the controller and the system mode coincide, i.e., $\hsigma(t^{\rm iso}_s) = \sigma(t^{\rm iso}_s) = j$, and as such $\tau_j = T_{jj}$.
\end{remark}

\section{Main Result} \label{sec:main results}
In this section, the structure and design method of the filters are presented. Then, computation methods of the thresholds and the estimated matched time are given to provide probabilistic guarantees on the diagnosis performance.
All proofs are moved to Section 4 to improve readability.

\subsection{Filter design: optimization-based method}
Suppose the current status is~$\Sih$, i.e.,~$\hsigma(k)=i,\sigma(k)=h$. The closed-loop dynamics~\eqref{eq:system}-\eqref{eq:controller} can be written as
\begin{align}\label{eq:closed-loop dynamics}
	x(k+1) &= A^{cl}_{ih}x(k) + E_{h} d(k) +(W_h+B_{h}K_{i}D_h)\omega(k)  \notag\\
	y(k) &= C_{h} x(k) +D_h\omega(k),
\end{align}
where~$A^{cl}_{ih} = A_{h}+B_{h}K_{i}C_{h}$. 
We further reformulate \eqref{eq:closed-loop dynamics} into the DAE format, which is
\begin{equation}\label{eq:DAE model}
    H_{ih}(\mfq) \begin{bmatrix} x \\d \end{bmatrix}
    + L(\mfq) [y] + G_{ih}(\mfq)[\omega] = 0.
\end{equation}
The operator~$\mfq$ is a time-shift operator, i.e.,~$x(k+1)=\mfq x(k)$. The polynomial matrices~$H_{ih}(\mfq)$,~$L(\mfq)$ and~$G_{ih}(\mfq)$ are given by
\begin{align*}
&H_{ih}(\mfq) =  H_{ih,1} \mfq +H_{ih,0} 
=\begin{bmatrix}
	-\mfq I+A^{cl}_{ih} &E_h\\
	C_h              &0
\end{bmatrix}, \\
	&L(\mfq) = L_0 =  \begin{bmatrix} 0 \\ -I \end{bmatrix},
	~G_{ih}(\mfq) = G_{ih,0} = \begin{bmatrix}
	W_h+B_h K_i D_h \\ D_h
	\end{bmatrix}.
\end{align*}

Inspired by~\cite{nyberg2006residual} and~\cite{esfahani2015tractable}, the filter~$\Fij$ is defined as
\begin{equation}\label{eq:filter}
\Fij(\mfq) = a^{-1}(\mfq)N_{ij}(\mfq)L(\mfq),
\end{equation}
where the polynomial row vector~$N_{ij}(\mfq)=\sum^{d_N}_{m=0} N_{ij,m} \mfq^m$, each $N_{ij,m} \in \R^{1 \times (n_x +n_y)}$ is a constant row vector, $d_N$ denotes the degree of~$N_{ij}(\mfq)$, and~$a(\mfq)$ is a~$(d_N+1)$-th order polynomial with all roots inside the unit disk. We define
\begin{equation}\label{eq:a(p)}
a(\mfq)=\mfq^{d_N+1} + a_{d_N}\mfq^{d_N} + \dots + a_{1}\mfq + a_{0},
\end{equation}
where~$a_m$ is a constant coefficient for each $m \in \{0,1,\dots,d_N\}$. Notice that the role of~$a(\mfq)$ is to ensure that the filter~$\Fij$ is strictly proper and stable. 
To simplify the design process, we fix~$a(\mfq)$ and~$d_N$, and suppose that all the filters are of the same degree. The coefficients of the numerator, i.e.,~$N_{ij,m}$ for~$m \in \{0,1,\dots,d_N\}$, are the design parameters. Multiplying the left-hand side of~\eqref{eq:DAE model} by~$a^{-1}(\mfq)N_{ij}(\mfq)$ yields
\begin{equation}\label{eq:r_ij}
\begin{split}
r_{ij} = \frac{N_{ij}(\mfq)L(\mfq)}{a(\mfq)} [y] = -\frac{N_{ij}(\mfq)H_{ih}(\mfq)}{a(\mfq)} \begin{bmatrix} x \\d \end{bmatrix}  -\frac{N_{ij}(\mfq)G_{ih}(\mfq)}{a(\mfq)} [\omega].
\end{split}
\end{equation}
To bound the~$\Ht$-norm of the transfer function from~$\omega$ to~$r_{ij}$, we derive the observable canonical form of~$\Fij(\mfq)$ from~\eqref{eq:filter} 
\begin{align}\label{eq:Fij state-space model}
    \bar{x}_{ij}(k+1) &= A_r \bar{x}_{ij}(k) + B_{r_{ij}} y(k) \notag\\
     r_{ij}(k) &= C_r \bar{x}_{ij}(k),
\end{align}
where$~\bar{x}_{ij}(k) \in \R^{d_N+1}$ denotes the state, matrices~$A_{r},B_{r_{ij}},C_r$ are
\begin{align}\label{eq:ob realize}
&A_r = \begin{bmatrix}
0  &\dots &0 &-a_{0} \\
1  &\dots &0 &-a_{1} \\
\vdots  &\ddots &\vdots &\vdots \\
0  &\dots &1 &-a_{d_N} \\
\end{bmatrix}, 
~B_{r_{ij}} =  \begin{bmatrix}
N_{ij,0} \\ N_{ij,1} \\ \vdots \\ N_{ij,d_N}
\end{bmatrix} L_0, 
~C_r = \begin{bmatrix}
0 \dots 0~1
\end{bmatrix}.
\end{align}
The parameters~$N_{ij,m}$ are reformulated into $B_{r_{ij}}$ here. Let us introduce an augmented state $\X_{ij}(k) := \left[x(k)^{\top} ~\bar{x}_{ij}(k)^{\top}\right]^{\top}$. The dynamics of~$\X_{ij}$ can be derived from~\eqref{eq:closed-loop dynamics} and~\eqref{eq:Fij state-space model}, which is 
\begin{equation}\label{eq:closed-loop and filter}
\begin{split}
    \X_{ij}(k+1) &=  \mathcal{A}_{ijh} \X_{ij}(k) + \mathcal{E}_{h} d(k) + \mathcal{D}_{ijh} \omega(k)\\
    r_{ij}(k) &= \mathcal{C}_r \X_{ij}(k),
\end{split} 
\end{equation}
where
\begin{align*}
    &\mathcal{A}_{ijh} = \begin{bmatrix}A^{cl}_{ih} &0 \\B_{r_{ij}} C_h &A_r
    \end{bmatrix}, 
    ~\mathcal{E}_{h} = \begin{bmatrix}
    E_h \\ 0
    \end{bmatrix},
    ~\mathcal{D}_{ijh} = \begin{bmatrix}
    W_h+B_{h}K_{i}D_h \\ B_{r_{ij}}D_h
    \end{bmatrix},
    ~\mathcal{C}_r = \begin{bmatrix}
    0 ~C_r
    \end{bmatrix}.
\end{align*}

To design filters satisfying conditions in Problem 1, we formulate an optimization problem in the following theorem. For clarity of exposition, we allocate the first two lines to the decision variables in the optimization problem.
 
\begin{theorem}[Optimal bank of filters: exact finite reformulation]\label{thm:minimization problem}
Consider the closed-loop dynamics~\eqref{eq:system}-\eqref{eq:controller} and the filter~$\Fij$ proposed in~\eqref{eq:filter} with the state-space realization~$(A_r, B_{r_{ij}}, C_r)$ as defined in~\eqref{eq:ob realize}. Given the order~$d_N$, coefficients of~$a(\mfq)$, and a sufficiently small~$\vartheta \in \mathbb{R}_+$,
Problem 1 as defined in~\eqref{eq:problem1} can be equivalently translated into the following finite optimization program: 
\begin{subequations}\label{eq:minimization problem}
    \begin{align}
    &\min ~\sum_{h=1}^n \eta_{ijh} \notag \\
    \textup{s.t.} 
    ~&N_{ij,m} \in \R^{1 \times (n_x +n_y)}, \forall m \in \{0,1,\dots,d_N\},
    ~\eta_{ijh} \in \mathbb{R}_+, \forall h\in\Nmodeset, \notag\\
    ~&P_{ij}\in \mathcal{S}^{d_N+1}, 
    ~P_{ijh}\in \mathcal{S}^{n_x+d_N+1},\forall h\in\Nmodeset\setminus\{j\}  \notag \\
    ~&\bar{N}_{ij} \bar{H}_{ij} = 0,  \label{eq:constraint 1} \\ 
    ~& \left| a^{-1}(1) \bar{N}_{ij} \mathcal{L}_{ih} \right|\geq 1, ~\forall h\in\Nmodeset\setminus\{j\},  \label{eq:constraint 2}\\
    &\begin{bmatrix}
       P_{ij}  &A_r P_{ij} &\mathcal{B}_{ij} \\
        *      &P_{ij}     &0 \\
        *      &*          &I 
    \end{bmatrix} \succeq \vartheta I, 
    ~\begin{bmatrix}
        \eta_{ijj}    &C_r P_{ij} \\
        *             &P_{ij}
    \end{bmatrix} \succeq \vartheta I, \label{eq:constraint 3} \\
    & \begin{bmatrix}
        P_{ijh} &\mathcal{A}_{ijh}P_{ijh} &\mathcal{D}_{ijh}   \\
        *  &P_{ijh} &0 \\ 
        * &* &I    \end{bmatrix}  \succeq \vartheta I, 
    ~\begin{bmatrix}
    \eta_{ijh}    &\mathcal{C}_r P_{ijh} \\
    *             &P_{ijh}
    \end{bmatrix} \succeq \vartheta I, 
    ~\forall h\in\Nmodeset\setminus\{j\}. \label{eq:constraint 4}
\end{align}
\end{subequations}
where the involved matrices are given by
\begin{align*}
    &\bar{N}_{ij}= [N_{ij,0} ~N_{ij,1} ~\dots ~N_{ij,{d_N}}], ~\mathcal{L}_{ih} = \bar{L} [\overbrace{I ~\dots ~I}^{d_N+1}]^{\top} C_h\left(I-A^{cl}_{ih}\right)^{-1}E_h, \\
    &\bar{H}_{ij}= \begin{bmatrix} 
                    H_{ij,0} &H_{ij,1}  &\dots    &0 \\
	                \vdots   &\ddots    &\ddots   &\vdots\\
	                0        &\dots     &H_{ij,0} &H_{ij,1}
                    \end{bmatrix},
   ~\mathcal{B}_{ij} =  -\begin{bmatrix}
                     N_{ij,0}  \\ \vdots \\ N_{ij,d_N}
                     \end{bmatrix} G_{ij,0}, 
   ~\bar{L}=\begin{bmatrix}
                    L_0, &\dots, &0 \\ \vdots &\ddots &\vdots\\ 0 &\dots &L_0
             \end{bmatrix}.
\end{align*}
\end{theorem}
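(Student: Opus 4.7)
The plan is to translate each of the three design conditions \eqref{eq:in-out relation 1}--\eqref{eq:in-out relation 3} into its finite algebraic counterpart by leveraging the parametrization $\Fij(\mfq) = a^{-1}(\mfq) N_{ij}(\mfq) L(\mfq)$ and the explicit residual expression in~\eqref{eq:r_ij}. Since $a(\mfq)$ and $d_N$ are fixed, the only degrees of freedom in the filter are the rows $N_{ij,0},\dots,N_{ij,d_N}$, stacked into $\bar{N}_{ij}$, so up to the $\Ht$-norm constraints the problem is already finite-dimensional.

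First I would handle the decoupling condition \eqref{eq:in-out relation 1}. Working in the DAE framework of~\cite{nyberg2006residual,esfahani2015tractable}, the signals $x$ and $d$ are independent polynomial unknowns in \eqref{eq:DAE model}, so $\Fij \T^{\Sij}_{dy} \equiv 0$ is equivalent to the polynomial identity $N_{ij}(\mfq) H_{ij}(\mfq) = 0$; expanding this product as a polynomial in $\mfq$ and collecting coefficients at each power yields precisely the block-Toeplitz linear system $\bar{N}_{ij} \bar{H}_{ij} = 0$ of \eqref{eq:constraint 1}. For the steady-state condition \eqref{eq:in-out relation 2}, I would evaluate the composite transfer function at $\mfq = 1$: from $\T^{\Sih}_{dy}(1) = C_h(I - A^{cl}_{ih})^{-1} E_h$, $\Fij(1) = a^{-1}(1) N_{ij}(1) L_0$, and $N_{ij}(1) = \bar{N}_{ij}[I,\dots,I]^\top$, the product yields directly $a^{-1}(1)\bar{N}_{ij}\mathcal{L}_{ih}$, which matches \eqref{eq:constraint 2}.

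The heart of the argument lies in the two $\Ht$-norm constraints. For the matched pair $h = j$, enforcing the polynomial identity above annihilates the $[x^\top, d^\top]^\top$-dependent term in \eqref{eq:r_ij}, reducing the transfer function from $\omega$ to $r_{ij}$ to the autonomous filter realization $(A_r, \mathcal{B}_{ij}, C_r)$ driven by $\omega$, with $\mathcal{B}_{ij}$ as defined in the statement. For $h \neq j$ no such reduction is available and one must work with the augmented realization $(\mathcal{A}_{ijh}, \mathcal{D}_{ijh}, \mathcal{C}_r)$ of \eqref{eq:closed-loop and filter}. In both cases I would invoke the standard discrete-time Lyapunov characterization of the $\Ht$-norm --- namely $\|T\|_{\Ht}^2 \leq \eta$ iff there exists $P \succ 0$ with $APA^\top + BB^\top \preceq P$ and $CPC^\top \leq \eta$ --- and apply two Schur complements to rewrite each pair of inequalities as the block LMIs appearing in \eqref{eq:constraint 3}--\eqref{eq:constraint 4}. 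The slack $\vartheta I$ is included purely to ensure the feasible set is closed so that the infimum in \eqref{eq:minimization problem} is attained.

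The main obstacle I foresee is arguing that \eqref{eq:constraint 1} is necessary, not merely sufficient, for \eqref{eq:in-out relation 1}: this requires interpreting \eqref{eq:DAE model} in the polynomial-matrix sense, in which $x$ and $d$ are genuinely free variables rather than trajectories coupled through the plant dynamics, so that the only way for $\Fij \T^{\Sij}_{dy}$ to vanish identically as a rational function is via the numerator identity. A more mechanical subtlety is keeping the bookkeeping straight between $\|T\|_{\Ht}$ and $\|T\|_{\Ht}^2$ across the Schur manipulations --- consistent with the variance interpretation of $\eta_{ijh}$ noted right after \eqref{eq:in-out relation 3} --- together with checking that $a(1) \neq 0$ and $I - A^{cl}_{ih}$ is invertible when evaluating the steady-state gain at $\mfq = 1$, both of which follow from the stability of $a(\mfq)$ and the closed-loop matrices implicit in the LMIs.
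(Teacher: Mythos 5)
Your proposal follows essentially the same route as the paper's proof: condition \eqref{eq:in-out relation 1} is reduced to the coefficient-matching identity $\bar{N}_{ij}\bar{H}_{ij}=0$ via polynomial multiplication, \eqref{eq:in-out relation 2} is obtained by evaluating the composite transfer function at $\mfq=1$, and the two $\Ht$-norm conditions are encoded through the Lyapunov/Gramian LMI characterization (the paper invokes the extended $\Ht$ lemma of~\cite{de2002extended}, whose LMIs are exactly the Schur-complement forms you describe) applied to $(A_r,\mathcal{B}_{ij},C_r)$ in the matched case and to $(\mathcal{A}_{ijh},\mathcal{D}_{ijh},\mathcal{C}_r)$ otherwise. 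The only difference is cosmetic: you derive the Schur complements by hand and explicitly flag the necessity of \eqref{eq:constraint 1} within the DAE interpretation, a point the paper treats implicitly, so your sketch is consistent with its argument.
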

\begin{proof}
The proof is provided in Section 4.1.
\end{proof}

Note that if~$N^*_{ij,0},\dots,N^*_{ij,d_N}$ are feasible solutions to~\eqref{eq:minimization problem}, then so are~$-N^*_{ij,0},\dots,-N^*_{ij,d_N}$ with the same objective values. This directly holds for constraints~\eqref{eq:constraint 1} and~\eqref{eq:constraint 2} and can be proved through Schur complement for the matrix inequalities constraints~\eqref{eq:constraint 3} and~\eqref{eq:constraint 4}. Thus, we can drop the absolute value of~\eqref{eq:constraint 2} without loss of generality.


The following proposition shows that the nonlinear matrix inequality in~\eqref{eq:constraint 4} can be safely approximated with a LMI.  
\begin{proposition}[Optimal bank of filters: safe convex approximation]\label{prop:LMI Appro}
Consider the optimization problem~\eqref{eq:minimization problem}. Given~$\alpha\in\mathbb{R}$ and~$\gamma \in \mathbb{R}_+$, the nonlinear inequality constraint as the first term in~\eqref{eq:constraint 4} can be safely approximated by the following LMI constraint if there exist matrices~$\mathcal{G}_{ijh,1} \in \mathcal{M}^{n_x+d_N+1}$,~$\mathcal{G}_{ijh,2} \in \mathcal{M}^{n_{\omega}}$ such that:
\begin{align}\label{eq:LMI Appro}
    \begin{bmatrix}
    P_{ijh} &\hat{A}_{ih}\mathcal{G}_{ijh} &\hat{B}_{r_{ij}} &0 \\
    *  &\Xi_{ijh} &0 &\left(\hat{D}_h \mathcal{G}_{ijh}  \right)^{\top} \\
    *  &* &\frac{1}{\gamma}I &0 \\
    *  &* &* &\gamma I
\end{bmatrix} \succeq \vartheta I,
\end{align}
where the involved matrices are defined as
\begin{align*}
    &\hat{A}_{ih} = \begin{bmatrix} 
    \begin{bmatrix} A^{cl}_{ih} &0 \\ 0 &A_r \end{bmatrix} 
    &\begin{bmatrix} W_h+B_h K_i D_h \\ 0 \end{bmatrix} 
    \end{bmatrix}, 
    ~\mathcal{G}_{ijh} = \begin{bmatrix}
    \mathcal{G}_{ijh,1} &0 \\ 0 &\mathcal{G}_{ijh,2}
    \end{bmatrix}, \\
    &~\hat{D}_h = \begin{bmatrix} 
    \begin{bmatrix} C_h &0 \end{bmatrix} &D_h
    \end{bmatrix},~\hat{B}_{r_{ij}} = \begin{bmatrix}
    0 \\ -B_{r_{ij}}
    \end{bmatrix}, 
    ~\Xi_{ijh} = \alpha \mathcal{G}_{ijh} + \alpha \mathcal{G}_{ijh}^{\top}-\alpha^2\begin{bmatrix} P_{ijh} &0 \\ * &I \end{bmatrix}.
\end{align*}
\end{proposition}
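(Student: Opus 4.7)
The plan is to derive the LMI \eqref{eq:LMI Appro} as a sufficient condition for the nonlinear inequality $\Lambda := \bigl[\begin{smallmatrix} P_{ijh} & \mathcal{A}_{ijh} P_{ijh} & \mathcal{D}_{ijh} \\ * & P_{ijh} & 0 \\ * & * & I \end{smallmatrix}\bigr] \succeq \vartheta I$ via a three-step chain: a Schur reduction that recasts the constraint in Lyapunov form, a Petersen/de~Oliveira-type slack $\mathcal{G}_{ijh}$ that decouples $B_{r_{ij}}$ from $P_{ijh}$, and a rank-one PSD splitting with scalar $\gamma$ that decouples $\hat{B}_{r_{ij}}$ from $\hat{D}_h\mathcal{G}_{ijh}$.

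First, I would observe the affine decomposition $[\mathcal{A}_{ijh},\,\mathcal{D}_{ijh}] = \hat{A}_{ih} - \hat{B}_{r_{ij}}\hat{D}_h$, which follows directly from the definitions of the hatted matrices. Applying Schur complement to $\Lambda$ (with trailing block $Q := \operatorname{diag}(P_{ijh}, I)$) then reduces the original constraint to $P_{ijh} - M Q M^\top \succeq 0$, where $M := \hat{A}_{ih} - \hat{B}_{r_{ij}}\hat{D}_h$. This exposes the two remaining nonlinearities: the decision variable $P_{ijh}$ sits inside $Q$, and the bilinear product $\hat{B}_{r_{ij}}\hat{D}_h$ resides in $M$.

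To eliminate the first, I invoke the matrix identity $(\mathcal{G}-\alpha Q)^\top Q^{-1}(\mathcal{G}-\alpha Q) \succeq 0$, which yields $\mathcal{G}^\top Q^{-1}\mathcal{G} \succeq \alpha(\mathcal{G}+\mathcal{G}^\top) - \alpha^2 Q = \Xi_{ijh}$ for any invertible $\mathcal{G}$ and $\alpha \in \mathbb{R}$. With $\mathcal{G}_{ijh} = \operatorname{diag}(\mathcal{G}_{ijh,1}, \mathcal{G}_{ijh,2})$ chosen block-diagonal so as to respect the block structure of $Q$, Schur complement shows that $\bigl[\begin{smallmatrix} P_{ijh} & M \mathcal{G}_{ijh} \\ * & \Xi_{ijh} \end{smallmatrix}\bigr] \succeq 0$ is sufficient for $P_{ijh} - M Q M^\top \succeq 0$ (since the non-singularity of $\mathcal{G}_{ijh}$ gives $\mathcal{G}_{ijh}(\mathcal{G}_{ijh}^\top Q^{-1}\mathcal{G}_{ijh})^{-1}\mathcal{G}_{ijh}^\top = Q$). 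To eliminate the second, in the off-diagonal term $M\mathcal{G}_{ijh} = \hat{A}_{ih}\mathcal{G}_{ijh} - \hat{B}_{r_{ij}}\hat{D}_h\mathcal{G}_{ijh}$, I would use the rank-one PSD decomposition
\begin{equation*}
\begin{bmatrix} \gamma \hat{B}_{r_{ij}}\hat{B}_{r_{ij}}^\top & -\hat{B}_{r_{ij}}\hat{D}_h\mathcal{G}_{ijh} \\ * & \tfrac{1}{\gamma}(\hat{D}_h\mathcal{G}_{ijh})^\top\hat{D}_h\mathcal{G}_{ijh} \end{bmatrix}
= \begin{bmatrix} \sqrt{\gamma}\,\hat{B}_{r_{ij}} \\ -\tfrac{1}{\sqrt{\gamma}}(\hat{D}_h\mathcal{G}_{ijh})^\top \end{bmatrix} \begin{bmatrix} \sqrt{\gamma}\,\hat{B}_{r_{ij}}^\top & -\tfrac{1}{\sqrt{\gamma}}\hat{D}_h\mathcal{G}_{ijh} \end{bmatrix} \succeq 0,
\end{equation*}
which reveals that the ``delinearised'' $2\times 2$ LMI $\bigl[\begin{smallmatrix} P_{ijh}-\gamma \hat{B}_{r_{ij}}\hat{B}_{r_{ij}}^\top & \hat{A}_{ih}\mathcal{G}_{ijh} \\ * & \Xi_{ijh} - \tfrac{1}{\gamma}(\hat{D}_h\mathcal{G}_{ijh})^\top\hat{D}_h\mathcal{G}_{ijh} \end{smallmatrix}\bigr] \succeq 0$ is itself a sufficient condition for $\bigl[\begin{smallmatrix} P_{ijh} & M\mathcal{G}_{ijh} \\ * & \Xi_{ijh} \end{smallmatrix}\bigr] \succeq 0$. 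A direct Schur complement on the trailing $\tfrac{1}{\gamma}I$ and $\gamma I$ blocks of \eqref{eq:LMI Appro} shows that \eqref{eq:LMI Appro} is in fact equivalent to this delinearised LMI, closing the chain.

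The main difficulty I anticipate is bookkeeping: the slack $\mathcal{G}_{ijh}$ must be block-diagonal (rather than free) so that it commutes with the block structure of $Q = \operatorname{diag}(P_{ijh}, I)$, and the strictness margin $\vartheta I$ has to be carried through each congruence/Schur step without violating the non-singularity of $\mathcal{G}_{ijh}$ used in the Petersen-type step. Once these structural constraints are respected, the remaining manipulations are routine block-matrix identities.
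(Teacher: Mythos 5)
Your proposal is correct and follows essentially the same route as the paper's proof: the identity $[\mathcal{A}_{ijh}~\mathcal{D}_{ijh}]=\hat{A}_{ih}-\hat{B}_{r_{ij}}\hat{D}_h$, the Young-type splitting with $\gamma$ to absorb the cross term $\hat{B}_{r_{ij}}\hat{D}_h\mathcal{G}_{ijh}$, the slack-variable bound $\mathcal{G}_{ijh}^{\top}\,\mathrm{diag}(P_{ijh},I)^{-1}\mathcal{G}_{ijh}\succeq\Xi_{ijh}$, and the Schur complement on the $\tfrac{1}{\gamma}I,\gamma I$ blocks are exactly the paper's ingredients, merely presented in the reverse order (you derive the LMI as a sufficient condition, the paper unwinds the LMI forward, using a congruence with $\mathrm{diag}(I,\mathcal{G}_{ijh}^{-\top})$ where you use $\Xi_{ijh}^{-1}\succeq\mathcal{G}_{ijh}^{-1}\,\mathrm{diag}(P_{ijh},I)\,\mathcal{G}_{ijh}^{-\top}$). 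The handling of the margin $\vartheta$ is at the same (informal) level of care as in the paper, so no substantive gap remains.
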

\begin{proof}
The proof is provided in Section 4.1.
\end{proof}

It is worth pointing out that the linear approximation~\eqref{eq:LMI Appro} provides a sufficient condition for the nonlinear matrix inequality in~\eqref{eq:constraint 4}. This means that any feasible solution to~\eqref{eq:LMI Appro} is necessarily a feasible solution to the nonlinear constraint.

Furthermore, we provide necessary and sufficient conditions for the feasibility of the optimization problem~\eqref{eq:minimization problem} in the following proposition. Here, the rank and eigenvalues of a matrix~$A$ are denoted by~$\text{Rank}(A)$ and~$\Lambda(A)$, resp.
\begin{proposition}[Optimal bank of filters: feasibility]\label{prop:feasibility}
The optimization problem~\eqref{eq:minimization problem} is feasible if and only if the following conditions are satisfied.
\begin{subequations}
    \begin{align}
    &(d_N+1)(n_x+n_y) > \textup{Rank}\left(\bar{H}_{ij}\right), \label{eq:lower bound dN}\\
    &\textup{Rank}\left(\left[\bar{H}_{ij} ~\mathcal{L}_{ih}\right]\right) > \textup{Rank}\left(\bar{H}_{ij}\right), \label{eq:rank condition}\\
    & |\Lambda(A_r)| < 1,~|\Lambda(A^{cl}_{ih})|<1, ~\text{\normalfont (i.e.,~$A_r$ and~$A^{cl}_{ih}$ are stable)}. \label{eq:eig-of-A}
    \end{align}
\end{subequations}
\end{proposition}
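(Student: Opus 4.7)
The plan is to prove necessity and sufficiency in turn, exploiting the fact that the constraints of the program \eqref{eq:minimization problem} decouple cleanly: the conditions \eqref{eq:constraint 1}--\eqref{eq:constraint 2} are linear in $\bar{N}_{ij}$, while the Lyapunov-type LMIs \eqref{eq:constraint 3}--\eqref{eq:constraint 4} depend on $\bar{N}_{ij}$ only through the data matrices $\mathcal{B}_{ij}$, $B_{r_{ij}}$, and $\mathcal{D}_{ijh}$. A useful preliminary remark is that $\mathcal{A}_{ijh}$ is block lower triangular with diagonal blocks $A^{cl}_{ih}$ and $A_r$, so its spectrum is $\Lambda(A^{cl}_{ih})\cup \Lambda(A_r)$; hence stability of $\mathcal{A}_{ijh}$ for every $h$ is equivalent to condition \eqref{eq:eig-of-A}.

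For necessity, observe first that \eqref{eq:constraint 2} forces $\bar{N}_{ij}\neq 0$; combined with \eqref{eq:constraint 1}, this requires the left null space of $\bar{H}_{ij}$ to have positive dimension, which is exactly \eqref{eq:lower bound dN}. Second, if $\mathcal{L}_{ih}$ belonged to the column span of $\bar{H}_{ij}$, then every $\bar{N}_{ij}$ satisfying $\bar{N}_{ij}\bar{H}_{ij}=0$ would also satisfy $\bar{N}_{ij}\mathcal{L}_{ih}=0$, contradicting \eqref{eq:constraint 2}; thus \eqref{eq:rank condition} is necessary. Finally, taking Schur complements twice on the first block of \eqref{eq:constraint 3} yields a strict Lyapunov inequality $P_{ij}\succ A_r P_{ij} A_r^\top + \mathcal{B}_{ij}\mathcal{B}_{ij}^\top$ with $P_{ij}\succ 0$, and analogously \eqref{eq:constraint 4} gives $P_{ijh}\succ \mathcal{A}_{ijh}P_{ijh}\mathcal{A}_{ijh}^\top + \mathcal{D}_{ijh}\mathcal{D}_{ijh}^\top$; these force $A_r$ and each $\mathcal{A}_{ijh}$ (hence $A^{cl}_{ih}$) to be Schur stable, which is \eqref{eq:eig-of-A}.

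For sufficiency, I build a feasible tuple step by step. By \eqref{eq:lower bound dN}, the left null space $\mathcal{K}$ of $\bar{H}_{ij}$ has positive dimension. For each $h\in\Nmodeset\setminus\{j\}$, condition \eqref{eq:rank condition} ensures $\mathcal{L}_{ih}\notin\mathrm{Range}(\bar{H}_{ij})$, so $\{\bar{N}\in\mathcal{K}:\bar{N}\mathcal{L}_{ih}=0\}$ is a proper subspace of $\mathcal{K}$. Since a finite union of proper subspaces does not exhaust $\mathcal{K}$, pick $\bar{N}_{ij}\in\mathcal{K}$ with $\bar{N}_{ij}\mathcal{L}_{ih}\neq 0$ for every $h\neq j$, and rescale it so that $|a^{-1}(1)\bar{N}_{ij}\mathcal{L}_{ih}|\geq 1$ holds for all such $h$ simultaneously; this settles \eqref{eq:constraint 1}--\eqref{eq:constraint 2}. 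With $\bar{N}_{ij}$ fixed, the matrices $\mathcal{B}_{ij}, B_{r_{ij}}, \mathcal{D}_{ijh}$ are determined, and by \eqref{eq:eig-of-A} together with the triangular-block remark both $A_r$ and $\mathcal{A}_{ijh}$ are Schur. The discrete-time Lyapunov equations $P_{ij}-A_r P_{ij} A_r^\top = \mathcal{B}_{ij}\mathcal{B}_{ij}^\top + \delta I$ and $P_{ijh}-\mathcal{A}_{ijh} P_{ijh}\mathcal{A}_{ijh}^\top = \mathcal{D}_{ijh}\mathcal{D}_{ijh}^\top + \delta I$ then admit unique positive definite solutions for every $\delta>0$. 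Reversing the Schur complements of the previous paragraph recovers \eqref{eq:constraint 3}--\eqref{eq:constraint 4}, and the auxiliary upper bounds $\eta_{ijh}$ can be chosen large enough so that the $\eta$-blocks also satisfy the $\vartheta I$ slack simultaneously for sufficiently small $\vartheta$.

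The main obstacle I anticipate is the bookkeeping in the sufficiency direction: one must verify that Schur complementation passes back and forth cleanly between each three-by-three block LMI and its standard Lyapunov/$\Ht$-bound form, and that the strict inequalities coming from stability can be tightened to the $\vartheta$-perturbed form uniformly over $h$ and without upsetting the already-chosen $\bar{N}_{ij}$. The finite-union hyperplane argument on $\mathcal{K}$ is conceptually elementary but should be stated explicitly to make clear that the avoidance can be performed jointly across the index $h$.
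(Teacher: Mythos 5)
Your proposal is correct and follows essentially the same route as the paper: rank--nullity to equate \eqref{eq:lower bound dN} with nontriviality of the left null space for \eqref{eq:constraint 1}, the range-space/contradiction argument for \eqref{eq:rank condition} versus \eqref{eq:constraint 2}, and the equivalence of the LMIs \eqref{eq:constraint 3}--\eqref{eq:constraint 4} with Schur stability of $A_r$ and $\mathcal{A}_{ijh}$ (whose block-triangular structure yields \eqref{eq:eig-of-A}), which the paper obtains by citing its $\Ht$-norm lemma rather than re-deriving the Lyapunov inequalities. Your finite-union-of-proper-subspaces argument, ensuring a single $\bar{N}_{ij}$ works simultaneously for all $h\in\Nmodeset\setminus\{j\}$ followed by rescaling, is in fact a more explicit justification of a step the paper only asserts.
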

\begin{proof}
The proof is provided in Section 4.1.
\end{proof}
Note that the inequality~\eqref{eq:lower bound dN} provides a way to find the minimum filter order~$d_N$.
According to~\eqref{eq:eig-of-A},~$|\Lambda(A^{cl}_{ih})|<1$ ensures that~\eqref{eq:constraint 4} is feasible. However,~$A^{cl}_{ih}$ could be unstable because the model and controller are unmatched. Hence, the constraints in~\eqref{eq:constraint 4} with unstable~$A^{cl}_{ih}$ should be excluded. Since the unmatched residuals of those unstable modes diverge from zero, removing those constraints does not affect the mode detection task. 

\begin{remark}[Observability]
The conditions~\eqref{eq:lower bound dN} and~\eqref{eq:rank condition} are related to the observability of switched affine systems theoretically~\cite{kusters2018switch}. In particular, the mode can be determined deterministically without noise if the two conditions are satisfied.  
Also, observability of each mode is not necessary, which is consistent with the result in~\cite[Theorem~8]{kusters2018switch}.
\end{remark}

We close this section with the following remark on different sources of conservatism for the proposed filter design: 
\begin{remark}[Conservatism analysis]
The conservatism of the proposed approximate solution stems from three different sources: 
\begin{enumerate}[label=(\roman*), itemsep = 0mm, topsep = 0mm, leftmargin = 4mm]

\item Reference signal dimension: We only focus on one-dimensional reference signals, but instead, we do not require any further prior assumptions on their values. As shown in~\cite{pan2019static}, this restriction is inevitable when the filter residual is one-dimensional since different elements of a multi-dimensional reference signal may cancel out each other's contributions. 
 
\item Filters denominator: To simplify the design, the filters denominator~$a(\mfq)$ are all fixed, which reduces design freedom.

\item Non-convexity: The exact reformulation of Problem~1 is a non-convex optimization problem (Theorem~\ref{thm:minimization problem}), for which we propose a safe convex approximation (Proposition~\ref{prop:LMI Appro}).
\end{enumerate}
\end{remark}


\subsection{Performance certificates}
With the filters designed by using~\eqref{eq:minimization problem}, we now determine the threshold~$\varepsilon_i$ and waiting time~$\tau_j$ to ensure proper detection task governed by~\eqref{eq:diagnosis rules}. Considering the stochastic measurement noise~$\omega$, we resort to the probabilistic guarantees depicted in~\eqref{eq:Prob guarantee}. Let us introduce the following lemma and assumption. 

\begin{lemma}[Sub-Gaussian concentration {\cite[Proposition 2.5.2]{vershynin2018high}}] \label{lem:subG}
Suppose~$\chi$ is an $\R^{n_{\chi}}$-valued sub-Gaussian random vector with positive parameter~$\zeta$, i.e., $\mE\left[\rme^{\phi \nu^{\top}(\chi-\mE[\chi])}\right] \leq {\rm e}^{\zeta^2 \phi^2 /2}$ for all $\phi \in \R$ and $\nu \in \R^{n_{\chi}}$ where $\|\nu\|_2 =1$. Then, we have
\begin{align}\label{eq:Concentra Ine}
    \mPr\Big[\|\chi - \mE[\chi]\|_\infty \leq \varepsilon \Big] \geq 1- 2 n_{\chi}\, {\rm e}^{-\varepsilon^2/({2{\zeta}^2})}, \quad \forall \varepsilon \in \mathbb{R}_+.
\end{align}
\end{lemma}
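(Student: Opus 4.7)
The plan is to establish the bound in two stages: first, derive a one-dimensional sub-Gaussian tail estimate for each individual coordinate of the centered vector $\chi - \mE[\chi]$ via the classical Chernoff technique, and then combine these $n_\chi$ scalar estimates through a union bound on the event that at least one coordinate exceeds the threshold $\varepsilon$. The complement of this union-bounded event is precisely $\{\|\chi - \mE[\chi]\|_\infty \leq \varepsilon\}$, which yields the stated probabilistic guarantee.

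For the first stage, I would apply the sub-Gaussian hypothesis with $\nu$ chosen to be the $i$-th standard basis vector $e_i$ (which satisfies $\|e_i\|_2 = 1$), yielding the moment generating function bound $\mE[\rme^{\phi(\chi_i - \mE[\chi_i])}] \leq \rme^{\zeta^2 \phi^2/2}$ for all $\phi \in \R$. Applying Markov's inequality to $\rme^{\phi(\chi_i - \mE[\chi_i])}$ with $\phi > 0$ gives
\begin{equation*}
    \mPr\bigl[\chi_i - \mE[\chi_i] > \varepsilon\bigr] \leq \rme^{-\phi \varepsilon}\, \mE\bigl[\rme^{\phi(\chi_i - \mE[\chi_i])}\bigr] \leq \rme^{\zeta^2 \phi^2/2 - \phi \varepsilon},
\end{equation*}
and minimizing the right-hand side over $\phi > 0$ at $\phi = \varepsilon/\zeta^2$ produces the one-sided bound $\rme^{-\varepsilon^2/(2\zeta^2)}$. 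The symmetric lower-tail bound follows identically by taking $\nu = -e_i$, so summing the two tails delivers the scalar bound $\mPr[|\chi_i - \mE[\chi_i]| > \varepsilon] \leq 2\rme^{-\varepsilon^2/(2\zeta^2)}$.

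For the second stage, I would note that $\|\chi - \mE[\chi]\|_\infty > \varepsilon$ exactly when $|\chi_i - \mE[\chi_i]| > \varepsilon$ for at least one index $i \in \{1,\dots,n_\chi\}$, so a union bound over the $n_\chi$ coordinates combined with the scalar tail estimate of the previous paragraph gives
\begin{equation*}
    \mPr\bigl[\|\chi - \mE[\chi]\|_\infty > \varepsilon\bigr] \leq \sum_{i=1}^{n_\chi} \mPr\bigl[|\chi_i - \mE[\chi_i]| > \varepsilon\bigr] \leq 2 n_\chi\, \rme^{-\varepsilon^2/(2\zeta^2)}.
\end{equation*}
Passing to the complementary event yields exactly \eqref{eq:Concentra Ine}. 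There is no genuine obstacle here; since the result is classical (and cited from Vershynin), the only matter of care is the proper selection of the unit vector $\nu$ to extract coordinatewise sub-Gaussianity from the vector-valued definition, and the bookkeeping of the factor of $2$ arising from the two-sided tail versus the $n_\chi$ factor arising from the union bound.
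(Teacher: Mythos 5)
Your proof is correct: the coordinatewise Chernoff bound obtained by taking $\nu = \pm e_i$ and optimizing $\phi = \varepsilon/\zeta^2$, followed by a union bound over the $n_\chi$ coordinates and passing to the complement, is exactly the standard argument behind this concentration inequality. The paper itself gives no proof of this lemma---it invokes it directly from Vershynin's book---so your write-up simply reconstructs the cited classical result, and does so without error, including the correct bookkeeping of the factor $2$ (two-sided tail) versus the factor $n_\chi$ (union bound).
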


\begin{assumption}[Sub-Gaussian noise]\label{as:subG}
The measurement noise~$\omega$ is an iid sub-Gaussian signal with zero mean and parameter~$\zeta_{\omega} \in \mathbb{R}_+$ as defined in Lemma~\ref{lem:subG}.
\end{assumption}
 
From~\eqref{eq:Concentra Ine}, the tails of sub-Gaussian distributions decay exponentially. Moreover, the class of sub-Gaussian distributions is broad, containing Gaussian, Bernoulli, and all bounded distributions. In the following results, the noise is assumed to be sub-Gaussian.
To improve readability, we further introduce several notations. Let the polynomial row vector~$N_{ij}(\mfq) := \left[\hat{N}_{ij}(\mfq) ~\check{N}_{ij}(\mfq)\right]$, where~$\hat{N}_{ij}(\mfq)$ and~$\check{N}_{ij}(\mfq)$ have dimensions~$n_x$ and~$n_y$, resp.
Define~$\lambda_{\max} := \max_{m\in \{1,\dots,d_{N+1}\}} |\lambda_{m}|$, where~$\lambda_m$ is a root of~$a(\mfq)$ defined in~\eqref{eq:a(p)}. 
These roots are chosen to be distinct, i.e.,~$\lambda_m \neq \lambda_n$ for~$m \neq n$. The following theorem provides conditions for the probabilistic performance certificates.

\begin{theorem}[Probabilistic performance certificates]\label{thm:Prob certificate}
Suppose Assumption~\ref{as:subG} holds and the dwell time is large enough. Consider the closed-loop dynamics~\eqref{eq:system}-\eqref{eq:controller} and the filter~$\Fij$ with the poles ${\lambda_m}, m\in\{1,\dots,d_{N+1}\}$, and the numerator designed by using~\eqref{eq:minimization problem} with the corresponding optimal solutions~$\eta^*_{ijj}$.
Given a reliability level~$\beta \in (0,1]$ and a constant~$\mu \in \mathbb{R}_+$, the probabilistic performance~\eqref{eq:Prob guarantee} in Problem 2 is satisfied, if the threshold~$\varepsilon_i$ is set as
\begin{align}\label{eq:threshold}
 \varepsilon_i = \left(\mu + \zeta_{\omega}\sqrt{2 \ln{(2/\beta)}}\right) \sqrt{\bar{\eta}_i}
 ,\quad\bar{\eta}_i=\max_{j\in\Nmodeset} \eta^*_{ijj},
\end{align}
and the estimated matched time~$T_{ij}$ equals to
\begin{align}\label{eq:T_ij}
    T_{ij} = \left\lceil \frac{\log \left( \psi_{ij}\left(\Fij,\X_{ij}(t_s) \right) / \left(\mu\sqrt{\bar{\eta}_{i}} \right) \right)}{\log \lambda^{-1}_{\max}}  \right\rceil,
\end{align}
where~{\small$\psi_{ij}\left(\Fij,\X_{ij}(t_s)\right) = \sqrt{d_N+1} \left(1 + \lambda_{\max}^{-1} \| \mathsf{B}_{ij} \|_2\right) \left\| \mE\left[\X_{ij}(t_s)\right] \right\|_{2}$}. The matrix~$\mathsf{B}_{ij}$ is defined as
\begin{align*}
\mathsf{B}_{ij} = \begin{bmatrix}
b_{ij,11}         &\dots  &b_{ij,1n_x} \\
\vdots            &\ddots &\vdots \\
b_{ij,(d_N+1) 1}  &\dots  &b_{ij,(d_N+1)n_x}
\end{bmatrix},
\end{align*}
where~$b_{ij,\ell h} = -\sum^{d_N}_{m=0}\hat{N}_{ij,m}(h) \lambda_{\ell}^{m} \big/ \left(\prod_{\tilde{\ell} \neq \ell} (\lambda_{\tilde{\ell}} - \lambda_\ell)\right)$ for~$h\in\{1,\dots,n_x\},~\ell, \tilde{\ell}\in \{1,\dots,d_N+1\} $, and~$\hat{N}_{ij,m}(h)$ denotes the~$h$-th element of~$\hat{N}_{ij,m}$.
\end{theorem}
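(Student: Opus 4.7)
The plan is to decompose the matched-mode residual $r_{ij}(t)$ for $t\ge t_s$ into a deterministic transient driven by the augmented initial state $\X_{ij}(t_s)$ and a zero-mean stochastic part driven by the sub-Gaussian noise $\{\omega(\tau)\}_{\tau\ge t_s}$. Applying the triangle inequality will then combine a geometric-decay bound on the former with a sub-Gaussian concentration bound on the latter into
\[
|r_{ij}(t)| \;\le\; \psi_{ij}\bigl(\Fij,\X_{ij}(t_s)\bigr)\,\lambda_{\max}^{t-t_s}\;+\;\zeta_{\omega}\sqrt{2\ln(2/\beta)\,\bar{\eta}_i}
\]
with probability at least $1-\beta$, which matches $\varepsilon_i$ in \eqref{eq:threshold} as soon as $t-t_s\ge T_{ij}$ from \eqref{eq:T_ij}.

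For the stochastic part, I would work in the state-space realisation \eqref{eq:closed-loop and filter} with matched choice $h=j$. The noise contribution to $r_{ij}(t)$ is the convolution $\sum_{\tau=t_s}^{t-1} g_{ij}(t-1-\tau)\,\omega(\tau)$, where $g_{ij}(k)=\mathcal{C}_r \mathcal{A}_{ijj}^{k}\mathcal{D}_{ijj}$ is the impulse response of the transfer function $\Fij \T^{\Sij}_{\omega y}$. Since the $\omega(\tau)$ are independent and sub-Gaussian with parameter $\zeta_{\omega}$ by Assumption~\ref{as:subG}, the standard composition rule for sub-Gaussian vectors makes this finite sum scalar sub-Gaussian with parameter at most $\zeta_{\omega}\bigl(\sum_{k\ge 0}\|g_{ij}(k)\|_2^{2}\bigr)^{1/2}=\zeta_{\omega}\,\|\Fij\T^{\Sij}_{\omega y}\|_{\Ht}$. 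The matched-mode LMI in \eqref{eq:constraint 3} is the textbook $\Ht$-norm LMI, which yields $\|\Fij\T^{\Sij}_{\omega y}\|_{\Ht}^{2}\le \eta^*_{ijj}\le \bar{\eta}_i$. Feeding $n_\chi=1$, parameter $\zeta_{\omega}\sqrt{\bar{\eta}_i}$, and $\varepsilon=\zeta_{\omega}\sqrt{2\ln(2/\beta)\bar{\eta}_i}$ into Lemma~\ref{lem:subG} yields the desired tail bound at reliability $1-\beta$.

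For the deterministic part, I would invoke the large-dwell-time assumption to replace $\X_{ij}(t_s)$ by its pre-switching steady-state expectation $\mE[\X_{ij}(t_s)]$, and use the matched-mode constraint $\bar{N}_{ij}\bar{H}_{ij}=0$ in \eqref{eq:constraint 1} --- equivalent to the polynomial identity $N_{ij}(\mfq)H_{ij}(\mfq)\equiv 0$ --- together with the DAE formula \eqref{eq:r_ij}. This identity cancels the system-state contribution in the ideal infinite-history sense, leaving only the filter's own zero-input response together with an initial-data correction supported on $[t_s,t_s+d_N]$; both components evolve in the $(A_r,\cdot,C_r)$-subsystem whose characteristic polynomial is $a(\mfq)$ with the distinct roots $\lambda_1,\dots,\lambda_{d_N+1}$. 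The partial-fraction expansion $\hat{N}_{ij}(\mfq)/a(\mfq)=\sum_{\ell=1}^{d_N+1} b_{ij,\ell\,\cdot}/(\mfq-\lambda_\ell)$ recovers exactly the residues $b_{ij,\ell h}$ defining the matrix $\mathsf{B}_{ij}$ in the statement, and summing the corresponding geometric series in $\{\lambda_\ell\}$ and bounding by the largest modulus yields
\[
|r_{ij}^{\mathrm{det}}(t)|\;\le\;\sqrt{d_N+1}\bigl(1+\lambda_{\max}^{-1}\|\mathsf{B}_{ij}\|_2\bigr)\bigl\|\mE[\X_{ij}(t_s)]\bigr\|_2\,\lambda_{\max}^{t-t_s}\;=\;\psi_{ij}\bigl(\Fij,\X_{ij}(t_s)\bigr)\,\lambda_{\max}^{t-t_s}.
\]
Requiring the right-hand side to be at most $\mu\sqrt{\bar{\eta}_i}$ and solving for the smallest admissible integer delay produces exactly formula \eqref{eq:T_ij}.

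The main obstacle I anticipate is the transient analysis: making precise the sense in which the closed-loop modes $\Lambda(A^{cl}_{ij})$ are annihilated by the polynomial identity $N_{ij}(\mfq)H_{ij}(\mfq)=0$, so that only the filter poles $\lambda_m$ --- and hence $\lambda_{\max}$ --- govern the decay of the $x(t_s)$-contribution to $r_{ij}$, and identifying the residues of $\hat{N}_{ij}/a$ with the explicit entries $b_{ij,\ell h}$ of $\mathsf{B}_{ij}$. Once this cancellation and identification are in place, the combination with the sub-Gaussian concentration bound is routine and produces the threshold $\varepsilon_i$ and waiting time $T_{ij}$ asserted in the theorem.
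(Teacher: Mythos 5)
Your proposal is correct and follows essentially the same route as the paper: a triangle-inequality split of $r_{ij}$ into its mean (treated by appending $x(t_s)$ as an impulse input to the DAE, cancelling the state contribution via $N_{ij}(\mfq)H_{ij}(\mfq)=0$, and doing the partial-fraction/Jordan-form expansion over the distinct filter poles to get the residues $b_{ij,\ell h}$ and the geometric $\lambda_{\max}^{\Delta k}$ bound yielding $T_{ij}$) and its zero-mean fluctuation (sub-Gaussian with parameter $\zeta_\omega\|\Fij\T^{\Sij}_{\omega y}\|_{\Ht}\le\zeta_\omega\sqrt{\bar\eta_i}$, which is exactly the paper's Lemma~\ref{lem:linear subG}, combined with Lemma~\ref{lem:subG} to get the $\zeta_\omega\sqrt{2\ln(2/\beta)\bar\eta_i}$ tail). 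The cancellation step you flag as the anticipated obstacle is resolved in the paper precisely by the mechanism you describe, so no substantive gap remains.
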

\begin{proof}
The proof is provided in Section 4.2.
\end{proof}

\begin{figure}[t!]
\center
\includegraphics[scale=1.2]{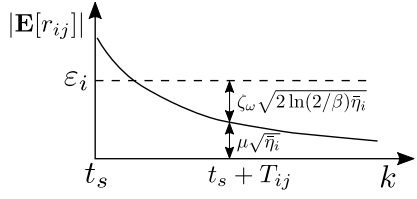}
\caption{Estimated matched time}\label{fig:T_ij}
\end{figure} 

The estimated matched time~$T_{ij}$ in~\eqref{eq:T_ij} is actually an upper bound for the time that~$|\mE[r_{ij}]|$ takes to arrive at~$\mu \sqrt{\bar{\eta}_i}$ after transition happens (as shown in Figure~\ref{fig:T_ij}). Then, we set the confidence interval according to $\beta$, such that~$\varepsilon_i$ is determined.

\begin{remark}[Threshold vs estimated matched time trade-off]
There is a trade-off in selecting~$\mu$ and~$\beta$ in~\eqref{eq:threshold}:
A smaller threshold~$\varepsilon_i$ provides high guarantees on excluding the unmatched residuals. We can decrease~$\varepsilon_i$ with smaller~$\mu$ or larger~$\beta$ from~\eqref{eq:threshold}. However, a smaller~$\mu$ can lead to a more conservative estimated matched time~$T_{ij}$ from~\eqref{eq:T_ij}. Also, a larger~$\beta$ increases the chance of false isolation. 
\end{remark}

\begin{remark}[Comparison with Chebyshev based bounds]
We highlight that the threshold~\eqref{eq:threshold} depends logarithmically on the reliability parameter, i.e.,~$\sqrt{\ln(2/\beta)}$. This is a significant improvement in comparison with the results based on Chebyshev's inequality (e.g.,~\cite[Section III.B]{boem2018plug}) in which the threshold scales polynomially by the factor~$\sqrt{1/\beta}$. 
\end{remark}

As a special case of~$T_{ij}$ in Theorem~\ref{thm:Prob certificate}, the waiting time~$\tau_j$ can be determined by 
\begin{align}\label{eq:tauj}   
   \tau_{j} = \left\lceil\frac{\log \left( \psi_{jj} \left(\Fjj,\X_{jj}(t^{\rm iso}_s)\right) / \left(\mu \sqrt{\bar{\eta}_j}\right)  \right)}{\log \lambda^{-1}_{\max} } 
   \right\rceil,
\end{align}
where {\small$\psi_{jj}\left(\Fjj,\X_{jj}(t^{\rm iso}_s)\right) = \sqrt{d_N+1} \left(1 + \lambda_{\max}^{-1} \| \mathsf{B}_{jj} \|_2\right) \left\| \mE\left[\X_{jj}\left(t^{\rm iso}_s\right)\right] \right\|_{2}$.} 

Observe that the expected values of~$\X_{ij}(t_s)$ and~$\X_{jj}(t^{\rm iso}_s)$ are required in~\eqref{eq:T_ij} and~\eqref{eq:tauj}. Recall that we assume that the dwell time is large enough and the system can reach the steady state before the next transition. The constant reference signal~$d$ is considered during the dwell time, i.e.,~$d(k) = \bar{d}$ for~$k\in[t_s,t_{s+1})$.
Then~$\mE\left[\X_{ij}(t_s)\right]$ can be estimated by its steady-state value~$\mE\left[\X_{ij}(t_s)\right] = \left(I-\mathcal{A}_{iji}\right)^{-1} \mathcal{E}_i \bar{d}$. For~$\mE\left[\X_{jj}(t^{\rm iso}_s)\right]$, since the actual diagnosis time is a random value, we first compute the steady-state value of~$\mE\left[\X_{jj}(t_s)\right]$. Then, according to the dynamics~\eqref{eq:closed-loop and filter}, we compute~$\max\limits_{i \in \Nmodeset} \left\| \mE\left[\X_{jj}(t_s+T_{ij})\right]\right\|_2$ as an estimation of~$\mE[\X_{jj}(t^{\rm iso}_s)]$.

According to the diagnosis rule~\eqref{eq:diagnosis rules}, one still needs to let the unmatched residuals be outside the threshold interval. Suppose the status is~$\Sij$. Inspired by the active fault diagnosis method~\cite{scott2014input}, we can design the reference signal~$d$ such that the unmatched residuals~$r_{ih}$ satisfy~$\left|\mE[r_{ih}]\right| \geq \varepsilon_i + \bar{\mu} \sqrt{\bar{\eta}_i}$ in the steady state, where~$\bar{\mu}\in\mathbb{R}_+$ is a constant. 
From the closed-loop dynamics~\eqref{eq:closed-loop dynamics} and~\eqref{eq:r_ij}, the expected value of~$r_{ih}$ can be written as
\begin{equation}\label{eq:r_ij_unmatched}
\mE[r_{ih}] = \frac{N_{ih}(\mfq)L(\mfq)}{a(\mfq)}C_j\left(\mfq I-A^{cl}_{ij}\right)^{-1} E_j \bar{d}.
\end{equation}
According to~\eqref{eq:r_ij_unmatched}, the requirement~$\left|\mE[r_{ih}]\right|  \geq \varepsilon_i + \bar{\mu} \sqrt{\bar{\eta}_i}$ is equivalent to choosing~$\bar{d}$ such that 
\begin{align}\label{eq:cond d}
\left|a^{-1}(1) \bar{N}_{ih} \mathcal{L}_{ij} \bar{d} \right| \geq   
 \varepsilon_i + \bar{\mu} \sqrt{\bar{\eta}_i}.
\end{align}
In the light of Lemma~\ref{lem:subG}, we have~$|r_{ih}|>\varepsilon_i$ with guaranteed probability in the steady state if~\eqref{eq:cond d} is satisfied.  

\begin{remark}[Regularities on the reference input]
When designing the filters and thresholds, the value of the reference signal is not necessary. 
However, this value is required when computing the estimated matched time~$T_{ij}$. Moreover, in order to separate the residuals of different modes in the presence of noise, the reference signal~$\bar{d}$ should satisfy~\eqref{eq:cond d}. Such constraint is not novel in the distinguishability problem for switched affine systems~\cite{rosa2011distinguishability}. This also can be interpreted as the persistence of excitation.
\end{remark}

\section{Technical Proofs of Main Results}\label{sec:proofs}
This section presents the technical proofs of the theoretical results in Section 3.

\subsection{Proofs of results in filter design}
Let us start with two lemmas required for the proof of Theorem~\ref{thm:minimization problem}.

\begin{lemma}(Multiplication of polynomial matrices \cite[Section III-A]{esfahani2015tractable})
\label{lem:poly_matrix_multiplication}
Let$~Q_1(\mfq)$ and$~Q_2(\mfq)$ be polynomial matrices of degree$~d_1$ and$~d_2$, resp., and defined by 
\begin{equation*}
	Q_1(\mfq) = \sum_{m=0}^{d_1} Q_{1,m}\mfq^{m}, \quad Q_2(\mfq) = \sum_{m=0}^{d_2} Q_{2,m}\mfq^{m},
\end{equation*} 
where $Q_{1,m} \in \R^{n_1 \times n_2}$ and $Q_{2,m} \in \R^{n_2 \times n_3}$ are the matrices of constant coefficients. The multiplication of$~Q_1(\mfq)$ and$~Q_2(\mfq)$ is equivalent to  
\begin{align*}
	Q_1(\mfq)Q_2(\mfq) = \bar{Q}_1\bar{Q}_2 \left[I ~\mfq I ~\dots ~\mfq^{d_1+d_2}I\right]^{\top},
\end{align*}
where~$\bar{Q}_1 = \begin{bmatrix} Q_{1,0} &Q_{1,1} &\dots   &Q_{1,{d_1}} \end{bmatrix}$ and
\begin{align*}
\bar{Q}_2 = \begin{bmatrix} Q_{2,0} &Q_{2,1} &\dots   &Q_{2,{d_2}} &0   &\dots &0 \\
 0   &Q_{2,0} &Q_{2,1} &\dots    &Q_{2,{d_2}}  &0  &\vdots \\
\vdots   &       &\ddots       &\ddots   &       &\ddots &0\\
 0   &0    &\dots   &Q_{2,0} &Q_{2,1}  &\dots  &Q_{2,{d_2}}
	\end{bmatrix}.
\end{align*}
\end{lemma}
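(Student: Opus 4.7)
The plan is to verify this identity by expanding both sides and matching the coefficient of each power of $\mfq$. Since the product $Q_1(\mfq)Q_2(\mfq)$ is a polynomial matrix of degree at most $d_1 + d_2$, it suffices to show that the two representations produce the same $n_1 \times n_3$ coefficient matrix at every power $\mfq^k$ for $k \in \{0,1,\ldots,d_1+d_2\}$.

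First, I would expand the left-hand side using the standard convolution formula for polynomial multiplication, writing
\begin{equation*}
Q_1(\mfq)Q_2(\mfq) \;=\; \sum_{k=0}^{d_1+d_2} C_k \, \mfq^k, \qquad C_k = \sum_{m+n=k} Q_{1,m}\, Q_{2,n},
\end{equation*}
under the convention that $Q_{1,m} = 0$ for $m \notin \{0,\ldots,d_1\}$ and $Q_{2,n} = 0$ for $n \notin \{0,\ldots,d_2\}$. Second, I would view $\bar{Q}_1 \bar{Q}_2$ as a block row of length $d_1+d_2+1$ with blocks in $\R^{n_1 \times n_3}$. Inspecting the banded shift structure of $\bar{Q}_2$, its $(m,k)$ block (using $0$-indexing) equals $Q_{2,k-m}$ when $0 \le k-m \le d_2$ and vanishes otherwise; consequently the $k$-th block of $\bar{Q}_1 \bar{Q}_2$ equals $\sum_{m=0}^{d_1} Q_{1,m}\, Q_{2,k-m}$, which agrees with $C_k$ from the first step. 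Finally, multiplying this block row on the right by the stacked column $[I\;\mfq I\;\cdots\;\mfq^{d_1+d_2}I]^{\top}$ reassembles exactly $\sum_{k=0}^{d_1+d_2} C_k \mfq^k$, completing the comparison.

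The main difficulty is purely notational: one must keep the $0$-indexed block positions in $\bar{Q}_2$ straight and verify that the staircase pattern in its definition indeed implements the convolution map $(m,k) \mapsto Q_{2,k-m}$ with the correct cut-offs at the boundaries (and appropriate adjustments when $d_1 > d_2$ or $d_2 > d_1$, where the number of nonzero entries per row of $\bar{Q}_2$ is governed by $d_2+1$ rather than by the total number of columns). Conceptually, however, the lemma is just the matrix-form rewriting of the usual polynomial product rule, so no deeper argument is required beyond careful bookkeeping, as already recorded in the cited reference.
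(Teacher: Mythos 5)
Your proposal is correct: the convolution/coefficient-matching argument, together with the observation that the $(m,k)$ block of $\bar{Q}_2$ is $Q_{2,k-m}$ (zero outside $0\le k-m\le d_2$), is exactly the bookkeeping that establishes the identity, and the banded structure of $\bar{Q}_2$ already handles the boundary cases you mention. The paper itself states this lemma without proof, citing \cite[Section III-A]{esfahani2015tractable}, and your argument is precisely the standard justification intended there.
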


The following lemma is a slight modification of the standard result concerning the~$\mathcal{H}_2$-norm of the stable LTI systems.

\begin{lemma}($\mathcal{H}_2$-norm~\cite[Lemma 1]{de2002extended})
\label{lem:H2_norm}
Consider the linear transfer function~$\T(\mfq) = C(\mfq I-A)^{-1}B$. For any constant~$\eta$, the system is stable and $\left\| \T(\mfq) \right\|^2_{\Ht} < \eta$ if and only if for all sufficiently small $\vartheta \in \R_+$, there exist~$P, Z \in \mathcal{S}$ such that the following LMIs are feasible:
\begin{equation*}
    \begin{bmatrix}
        P &AP &B \\
        * &P &0\\
        * &* &I
    \end{bmatrix} \succeq \vartheta I,
    ~\begin{bmatrix}
        Z &CP\\
        * &P
    \end{bmatrix} \succeq \vartheta I, 
    ~{\rm Trace}(Z) \leq \eta -\vartheta.
\end{equation*}
\end{lemma}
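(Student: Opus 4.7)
\textbf{Proof proposal for Lemma~\ref{lem:H2_norm}.}
The plan is to reduce the three LMIs to the classical discrete-time $\Ht$-norm LMI characterization via two successive Schur complements, and then invoke the standard Lyapunov/Gramian machinery. Throughout, recall that if $A$ is Schur stable then $\|\T\|_{\Ht}^{2} = \mathrm{Trace}(C W_c C^{\top})$, where the controllability Gramian $W_c \succeq 0$ is the unique solution of the discrete Lyapunov equation $A W_c A^{\top} - W_c + B B^{\top} = 0$, and $W_c = \sum_{k \geq 0} A^{k} B B^{\top} (A^{\top})^{k}$.

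First I would show that for $\vartheta > 0$ sufficiently small, the first LMI is equivalent to the pair of strict inequalities
\begin{equation*}
P \succ 0, \qquad A P A^{\top} + B B^{\top} \prec P .
\end{equation*}
This is a twofold Schur complement: starting from the $3\times 3$ block matrix, eliminating the bottom-right identity block yields $\left[\begin{smallmatrix} P - B B^{\top} & A P \\ P A^{\top} & P \end{smallmatrix}\right] \succ 0$, and a second Schur complement on the $(2,2)$ block $P \succ 0$ produces $P - B B^{\top} - A P A^{\top} \succ 0$. Analogously, the second LMI is equivalent to $P \succ 0$ and $Z \succ C P C^{\top}$ by a single Schur step. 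The slack $\vartheta I$ on the right-hand side translates these equivalences into their strict form for sufficiently small $\vartheta$.

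Next I would establish the two directions using these equivalent strict conditions. For sufficiency (LMIs $\Rightarrow$ stability and $\|\T\|_{\Ht}^{2}<\eta$): the inequality $A P A^{\top} \prec P$ together with $P \succ 0$ is a standard discrete Lyapunov certificate, so $A$ is Schur stable and $W_c$ exists. Subtracting the strict Lyapunov inequality from the Gramian equation and iterating gives $A^{k}(P - W_c) (A^{\top})^{k} \succeq P - W_c - \sum_{j<k} A^{j}(\dots)(A^{\top})^{j}$; a clean induction argument yields $P \succeq W_c$, hence $C P C^{\top} \succeq C W_c C^{\top}$ and
\begin{equation*}
\|\T\|_{\Ht}^{2} = \mathrm{Trace}(C W_c C^{\top}) \leq \mathrm{Trace}(C P C^{\top}) < \mathrm{Trace}(Z) \leq \eta - \vartheta < \eta.
\end{equation*}
For necessity, assume $A$ is Schur stable and $\|\T\|_{\Ht}^{2} < \eta$. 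I would construct a feasible triple by perturbing the Gramian: let $P_{\delta}$ be the unique positive-definite solution of $A P_{\delta} A^{\top} - P_{\delta} + B B^{\top} + \delta I = 0$, which exists because $A$ is Schur. Then $A P_{\delta} A^{\top} + B B^{\top} = P_{\delta} - \delta I \prec P_{\delta}$, with strict margin linear in $\delta$. As $\delta \downarrow 0$, $P_{\delta} \to W_c$ by continuous dependence on parameters, so $\mathrm{Trace}(C P_{\delta} C^{\top}) \to \|\T\|_{\Ht}^{2} < \eta$. Pick $\delta$ small enough to keep $\mathrm{Trace}(C P_{\delta} C^{\top}) < \eta$, set $Z = C P_{\delta} C^{\top} + \delta' I$ with $\delta'$ small enough that $\mathrm{Trace}(Z) < \eta$, and choose $\vartheta$ smaller than the slack margins appearing in all three strict inequalities.

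The main obstacle is the cosmetic but finicky bookkeeping between the three slacks $\vartheta$, $\delta$, and $\delta'$, and making sure the ``sufficiently small $\vartheta$'' clause is respected in both directions; the Schur reductions and the Gramian inequality $P \succeq W_c$ are routine once the strict Lyapunov inequality has been peeled out of the $3\times 3$ block form.
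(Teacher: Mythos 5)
Your proposal is correct, but there is nothing in the paper to compare it against: the paper does not prove this lemma at all, it simply quotes it as a known result from the cited reference (de Oliveira et al., Lemma 1) and uses it as a black box in the proofs of Theorem~\ref{thm:minimization problem} and Proposition~\ref{prop:feasibility}. Your derivation is the standard one and all the steps check out: the two Schur complements correctly reduce the $3\times 3$ block LMI to $P\succ 0$ and $APA^{\top}+BB^{\top}\prec P$, and the $2\times 2$ one to $Z\succ CPC^{\top}$; the Lyapunov inequality certifies Schur stability; the comparison $P\succeq W_c$ with the controllability Gramian gives $\|\T\|^2_{\Ht}=\mathrm{Trace}(CW_cC^{\top})\leq\mathrm{Trace}(CPC^{\top})<\mathrm{Trace}(Z)\leq\eta-\vartheta$; and the perturbed Gramian $P_{\delta}$ handles necessity, with the ``for all sufficiently small $\vartheta$'' clause absorbed by the strictness margins. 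The only cosmetic flaw is the displayed telescoping inequality in the $P\succeq W_c$ step, which as written is garbled; the clean version is to set $\Delta=P-W_c$, observe $\Delta-A\Delta A^{\top}=Q\succ 0$, and unroll to $\Delta=\sum_{j=0}^{k-1}A^{j}Q(A^{\top})^{j}+A^{k}\Delta(A^{\top})^{k}\to\sum_{j\geq 0}A^{j}Q(A^{\top})^{j}\succeq 0$ using Schur stability. With that repaired, your argument is a complete and correct proof of the cited result.
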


\begin{proof}[Proof of Theorem~\ref{thm:minimization problem}]
First, we show that the equality~\eqref{eq:constraint 1} guarantees the satisfaction of the property~\eqref{eq:in-out relation 1}. 
According to Lemma~\ref{lem:poly_matrix_multiplication}, it holds that
\begin{align*}
    N_{ij}(\mfq)H_{ij}(\mfq)=\bar{N}_{ij}\bar{H}_{ij}\left[I~\mfq I~\dots~\mfq^{d_N+1}I\right]^{\top}.
\end{align*}
Hence,~\eqref{eq:constraint 1} implies that~$N_{ij}(\mfq)H_{ij}(\mfq) = 0$.
The contribution of~$d$ to~$r_{ij}$ is completely canceled when the status is~$\Sij$ ($h=j$ in~\eqref{eq:r_ij}). This concludes the first part of the proof.

In the second part of the proof, we show that the constraint~\eqref{eq:constraint 2} implies the satisfaction of the property~\eqref{eq:in-out relation 2}. 
Suppose the status is~$\Sih$. According to the closed-loop dynamics~\eqref{eq:closed-loop dynamics}, we have
\begin{align*}
    y = C_h\left(\mfq I-A^{cl}_{ih}\right)^{-1} E_h [d] 
    + \left[C_h\left(\mfq I-A^{cl}_{ih}\right)^{-1} (W_h+B_hK_iD_h) + D_h\right] [\omega].
\end{align*}
By virtue of~\eqref{eq:r_ij} and the expression of~$y$, the transfer function from~$d$ to~$r_{ij}$ can be written as
\begin{align*}
\Fij\T^{\Sih}_{dy}(\mfq)
&= a^{-1}(\mfq)N_{ij}(\mfq)L(\mfq) C_h\left(\mfq I-A^{cl}_{ih}\right)^{-1}E_h \\
&=a^{-1}(\mfq)\bar{N}_{ij}\bar{L} \left[I ~\mfq I ~\dots ~\mfq^{d_N}I\right]^{\top}C_h \left(\mfq I-A^{cl}_{ih}\right)^{-1}E_h, 
\end{align*}
where Lemma~\ref{lem:poly_matrix_multiplication} is used in the second equality. Then, we enforce the absolute value of the steady-state gain of~$\Fij\T^{\Sih}_{dy}$ to be larger than or equal to 1 when~$h \neq j$, which is
\begin{align*}
   \left| \left[\Fij\T^{\Sih}_{dy}\right]_{ss} \right| = \left| a^{-1}(1) \bar{N}_{ij} \mathcal{L}_{ih} \right| \geq 1.
\end{align*}
This concludes the second part of the proof.
 
In the third part, we show that the inequalities \eqref{eq:constraint 3} and \eqref{eq:constraint 4} enforce the desired property~\eqref{eq:in-out relation 3}. 
When the status is~$\Sij$, as shown in~\eqref{eq:r_ij}, the transfer function from~$\omega$ to~$r_{ij}$  becomes
\begin{equation}\label{eq:T^Sij_wrij}
    \Fij\T^{\Sij}_{\omega y}(\mfq) =-a^{-1}(\mfq) N_{ij}(\mfq)G_{ij}(\mfq),
\end{equation}
where~$[x^{\top} ~d^{\top}]^{\top}$ is decoupled by~\eqref{eq:in-out relation 1}. Let~$(A_r,\mathcal{B}_{ij},C_r)$ be the observable canonical realization of~\eqref{eq:T^Sij_wrij}, whose derivation process is similar to that of~\eqref{eq:ob realize}. According to Lemma~\ref{lem:H2_norm}, the inequalities~\eqref{eq:constraint 3} imply~$\left\| \Fij\T^{\Sij}_{\omega y} \right\|^2_{\Ht} < \eta_{ijj}$ directly. Note that the slack variable $Z$ shown in Lemma~\ref{lem:H2_norm} has one dimension in this problem, thus the third inequality is dropped. 
When the status is~$\Sih$ for~$h\in \Nmodeset\setminus\{j\}$, the transfer function from~$\omega$ to~$r_{ij}$ can be obtained from~\eqref{eq:closed-loop and filter}. Again, according to Lemma~\ref{lem:H2_norm}, the inequalities~\eqref{eq:constraint 4} imply~$\left\| \Fij\T^{\Sih}_{\omega y} \right\|^2_{\Ht} < \eta_{ijh}$. 
Then we take the sum of~$\eta_{ijh}$ for all~$h\in\Nmodeset$ as the objective function to minimize the effect of~$\omega$ on~$r_{ij}$.
This completes the proof. 
\end{proof}

\begin{proof}[Proof of Proposition~\ref{prop:LMI Appro}]
This proof is to show that~\eqref{eq:LMI Appro} ensures the satisfaction of the nonlinear matrix inequality in~\eqref{eq:constraint 4}. By applying Schur complement to~\eqref{eq:LMI Appro}, we have
\begin{align}\label{eq:LMI ineq-1}
     &\begin{bmatrix}
     P_{ijh} &\hat{A}_{ih}\mathcal{G}_{ijh}  \\
     *       &\Xi_{ijh} \end{bmatrix} 
    -\begin{bmatrix}
     \hat{B}_{r_{ij}}  &0 \\  
     *                 &\left(\hat{D}_h\mathcal{G}_{ijh}  \right)^{\top} \end{bmatrix}
    \begin{bmatrix} \gamma I &0 \\ 
                    *        &\frac{1}{\gamma} I  
    \end{bmatrix}
    \begin{bmatrix} \hat{B}_{r_{ij}}^{\top} &0 \\ * &\hat{D}_h\mathcal{G}_{ijh}  \end{bmatrix}  \notag\\
    = &\begin{bmatrix}
     P_{ijh} &\hat{A}_{ih}\mathcal{G}_{ijh}  \\
    *  &\Xi_{ijh} \end{bmatrix} 
    -\gamma \begin{bmatrix} \hat{B}_{r_{ij}} \\ 0  \end{bmatrix} 
    \begin{bmatrix} \hat{B}_{r_{ij}}^{\top} ~0  \end{bmatrix} 
    - \frac{1}{\gamma}  \begin{bmatrix} 0  \\ \left(\hat{D}_h\mathcal{G}_{ijh} \right)^{\top}  \end{bmatrix} 
    \begin{bmatrix} 0 ~\hat{D}_h \mathcal{G}_{ijh}  \end{bmatrix}
    \succeq  \vartheta I. 
\end{align}
Note that, for matrices~$A, B$ with appropriate dimensions and any scalar~$\gamma > 0$, it holds that~$ \gamma A A^{\top}+ \frac{1}{\gamma}B^{\top}B \succeq AB+B^{\top}A^{\top} $~\cite[Lemma 1]{chang2013new}. We have
\begin{align*}
    &-\begin{bmatrix} \hat{B}_{r_{ij}} \\ 0  \end{bmatrix} 
      \begin{bmatrix} 0 &\hat{D}_h \mathcal{G}_{ijh}  \end{bmatrix}
   -\begin{bmatrix} 0  \\ \left(\hat{D}_h \mathcal{G}_{ijh}\right)^{\top}  \end{bmatrix}    \begin{bmatrix} \hat{B}_{r_{ij}}^{\top} &0  \end{bmatrix} \\
   \succeq
    &-\gamma \begin{bmatrix} \hat{B}_{r_{ij}} \\ 0  \end{bmatrix} 
    \begin{bmatrix} \hat{B}_{r_{ij}}^{\top} &0  \end{bmatrix} - \frac{1}{\gamma}  \begin{bmatrix} 0  \\ \left(\hat{D}_h \mathcal{G}_{ijh} \right)^{\top}  \end{bmatrix} 
    \begin{bmatrix} 0 &\hat{D}_h \mathcal{G}_{ijh}  \end{bmatrix} .
\end{align*}
Thus, the inequality~\eqref{eq:LMI ineq-1} can be written as
\begin{align}\label{eq:LMI ineq-3}
    &\begin{bmatrix}
    P_{ijh} &\hat{A}_{ih}\mathcal{G}_{ijh}  \\
    *  &\Xi_{ijh} \end{bmatrix} 
    -\begin{bmatrix} \hat{B}_{r_{ij}} \\ 0  \end{bmatrix} 
      \begin{bmatrix} 0 ~\hat{D}_h\mathcal{G}_{ijh}  \end{bmatrix}
   -\begin{bmatrix} 0  \\ \left(\hat{D}_h\mathcal{G}_{ijh}\right)^{\top}  \end{bmatrix}    \begin{bmatrix} \hat{B}_{r_{ij}}^{\top} ~0\end{bmatrix} \notag\\
   =&\begin{bmatrix}
     P_{ijh} &\hat{A}_{ih}\mathcal{G}_{ijh}-\hat{B}_{r_{ij}}\hat{D}_h\mathcal{G}_{ijh}  \\
    *  &\Xi_{ijh} \end{bmatrix} \succeq  \vartheta I.
\end{align}
Expanding~$\hat{A}_{ih}\mathcal{G}_{ijh} - \hat{B}_{r_{ij}}\hat{D}_h\mathcal{G}_{ijh}$ leads to
\begin{align}\label{eq:LMI ineq-4}
&\begin{bmatrix} \begin{bmatrix} A^{cl}_{ih} &0 \\ 0 &A_r \end{bmatrix} \mathcal{G}_{ijh,1}
                 &\begin{bmatrix} W_h+B_h K_i D_h \\ 0 \end{bmatrix} \mathcal{G}_{ijh,2}  \end{bmatrix} 
- \begin{bmatrix} \begin{bmatrix} 0 \\ -B_{r_{ij}} \end{bmatrix} \begin{bmatrix} C_h &0 \end{bmatrix} \mathcal{G}_{ijh,1} & \begin{bmatrix} 0 \\ -B_{r_{ij}} \end{bmatrix}D_h \mathcal{G}_{ijh,2}  \end{bmatrix} \notag\\
= &\begin{bmatrix}
 \begin{bmatrix}   A^{cl}_{ih} &0 \\ B_{r_{ij}}C_h &Ar  \end{bmatrix}\mathcal{G}_{ijh,1}
 &\begin{bmatrix}   W_h+B_h K_i D_h \\ B_{r_{ij}}D_h  \end{bmatrix} \mathcal{G}_{ijh,2}
\end{bmatrix} 
= \begin{bmatrix}
 \mathcal{A}_{ijh} &\mathcal{D}_{ijh} \end{bmatrix} \mathcal{G}_{ijh}.
\end{align}
From~\eqref{eq:LMI ineq-4},~the inequality~\eqref{eq:LMI ineq-3} is equivalent to
\begin{align}\label{eq:LMI ineq-5}
    \begin{bmatrix}
     P_{ijh} &\begin{bmatrix}  \mathcal{A}_{ijh} &\mathcal{D}_{ijh} \end{bmatrix} \mathcal{G}_{ijh} \\
    *  &\Xi_{ijh} \end{bmatrix} \succeq  \vartheta I.
\end{align}
For a scalar~$\alpha \in \mathbb{R}$, matrices~$A,B$ with appropriate dimensions, and~$A\succ 0$, note that~$(B-\alpha A)^{\top}A^{-1}(B-\alpha A) \succeq 0$ implies~$B^{\top}A^{-1}B \succeq  \alpha B +\alpha B^{\top} -\alpha^2 A$. Thus, we have
\begin{align}\label{eq:LMI ineq-6}
\begin{split}
    {\mathcal{G}_{ijh}}^{\top} \begin{bmatrix} P_{ijh} &0 \\ * &I \end{bmatrix}^{-1} \mathcal{G}_{ijh} \succeq \Xi_{ijh}.    
\end{split}
\end{align}
By combining~\eqref{eq:LMI ineq-5} and~\eqref{eq:LMI ineq-6}, we obtain
\begin{align}\label{eq:LMI ineq-7}
    \begin{bmatrix}
     P_{ijh} &\begin{bmatrix}  \mathcal{A}_{ijh} &\mathcal{D}_{ijh} \end{bmatrix} \mathcal{G}_{ijh}  \\
    *  &{\mathcal{G}_{ijh}}^{\top} \begin{bmatrix} P_{ijh} &0 \\ * &I \end{bmatrix}^{-1} \mathcal{G}_{ijh} \end{bmatrix} \succeq  \vartheta I.
\end{align}
Pre- and post-multiplying~\eqref{eq:LMI ineq-7} by $\text{diag}(I,{\mathcal{G}_{ijh}}^{-\top})$ and $\text{diag}(I,P_{ijh},I)$ and their transpose successively, we arrive at
\begin{align*}
    \begin{bmatrix}
     P_{ijh} &\mathcal{A}_{ijh} P_{ijh} &\mathcal{D}_{ijh}   \\
    *  &P_{ijh} &0 \\ 
    * &* &I    \end{bmatrix}  \succeq \vartheta I.
\end{align*}
This completes the proof.
\end{proof} 

\begin{proof}[Proof of Proposition~\ref{prop:feasibility}]
We first show that the inequality~\eqref{eq:lower bound dN} is a necessary and sufficient condition for the constraint~\eqref{eq:constraint 1} having non-trivial solutions. 
According to Rank Plus Nullity Theorem~\cite[Chapter 4]{meyer2000matrix}, it holds that~$(d_N+1)(n_x+n_y)  = \textup{Rank}\left(\bar{H}_{ij}\right) + \textup{Null}\left(\bar{H}_{ij}\right)$, where~$\textup{Null}\left(\bar{H}_{ij}\right)$ denotes the dimension of the left null space of~$\bar{H}_{ij}$.
Thus, ~\eqref{eq:constraint 1} having non-trivial solutions is equivalent to~$\textup{Null}\left(\bar{H}_{ij}\right)$ being nonzero. 
This concludes the first part of the proof.

Second, we show that~\eqref{eq:rank condition} is equivalent to~\eqref{eq:constraint 2} when~\eqref{eq:lower bound dN} holds.  
~$(\Rightarrow)$ We proceed with the proof by contradiction. Suppose that~\eqref{eq:constraint 2} holds but~\eqref{eq:rank condition} is not satisfied, we have~$\textup{Rank}\left(\left[\bar{H}_{ij} ~\mathcal{L}_{ih}\right]\right) = \textup{Rank}\left(\bar{H}_{ij}\right)$. This means that~$\mathcal{L}_{ih}$ belongs to the column range space of~$\bar{H}_{ij}$. In other words, there exists a vector~$\xi \in \R^{(n_x+n_d)(d_N+2)}$, such that~$\mathcal{L}_{ih} = \bar{H}_{ij}\xi$. Since~$\bar{N}_{ij}$ satisfying~$\bar{N}_{ij}\bar{H}_{ij} = 0$, we have~$\bar{N}_{ij} \mathcal{L}_{ih} = \bar{N}_{ij}\bar{H}_{ij} \xi = 0$, which contradicts to~\eqref{eq:constraint 2}. ~$(\Leftarrow)$ Assume that~\eqref{eq:rank condition} holds. This means that the left null space of~$\bar{H}_{ij}$ and~$\mathcal{L}_{ih}$ are not the same. Thus, one can find a~$\bar{N}_{ij}$ which satisfies~\eqref{eq:constraint 1} and~\eqref{eq:constraint 2} at the same time. This completes the second part of the proof.

Finally, it is known from Lemma~\ref{lem:H2_norm} that~$|\Lambda(A_r)| < 1$, and~$|\Lambda(\mathcal{A}_{ijh})|<1$ are necessary and sufficient conditions for the feasibility of~\eqref{eq:constraint 3} and~\eqref{eq:constraint 4}, resp. 
Recalling the definition of~$\mathcal{A}_{ijh}$ in~\eqref{eq:closed-loop and filter}, ~$|\Lambda(\mathcal{A}_{ijh})|<1$ if and only if~$|\Lambda(A_r)| < 1$ and~$~|\Lambda(A^{cl}_{ih})|<1$.
This completes the proof.
\end{proof}

\subsection{Proofs of probabilistic certificates}
We introduce the following lemma to prove Theorem~\ref{thm:Prob certificate}.
\begin{lemma}[Linear transformation of sub-Gaussian signals]\label{lem:linear subG}
Suppose~$\T_{\omega r}$ is a transfer function from~$\omega$ to $r$ with the state-space realization~$(A,B,C)$, i.e.,~$r=\T_{\omega r} [\omega] = C(\mfq I-A)^{-1}B [\omega]$. If the input~$\omega$ is an iid sub-Gaussian signal with zero mean and parameter~$\zeta_{\omega}$, the output~$r$ is also sub-Gaussian with zero mean and the respective parameter~$\zeta_r = \| \T_{\omega r} \|_{\Ht} \zeta_{\omega}$.
\end{lemma}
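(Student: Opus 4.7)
The plan is to express $r(k)$ explicitly as an infinite linear combination of the iid sub-Gaussian inputs $\omega(k-j)$ via the impulse response of $\T_{\omega r}$, and then compute its moment generating function directly by exploiting independence across time and the sub-Gaussian hypothesis on each $\omega(k-j)$.

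First I would write down the impulse response. Since $(A,B,C)$ is a state-space realization of $\T_{\omega r}$ and $\T_{\omega r}$ is stable (which must be invoked so that the sum below converges absolutely), the (strictly proper) discrete-time input--output relation in the steady state is
\begin{equation*}
   r(k) = \sum_{j=1}^{\infty} h(j)\, \omega(k-j), \qquad h(j) := C A^{j-1} B,
\end{equation*}
with $h(j) \in \R^{n_r \times n_\omega}$. Zero mean of $r(k)$ is then immediate because $\mE[\omega(k-j)]=0$ for each $j$ and linearity of expectation.

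Next, to check sub-Gaussianity, pick any direction $\nu \in \R^{n_r}$ with $\|\nu\|_2 = 1$ and any $\phi \in \R$. By independence of the $\omega(k-j)$'s across $j$, the MGF factorizes:
\begin{equation*}
   \mE\!\left[\rme^{\phi\, \nu^{\top} r(k)}\right]
   = \prod_{j=1}^{\infty} \mE\!\left[\rme^{\phi\, \nu^{\top} h(j)\, \omega(k-j)}\right].
\end{equation*}
For each factor I would rewrite $\phi\, \nu^{\top} h(j)\,\omega(k-j) = \tilde\phi_j\, \tilde\nu_j^{\top}\omega(k-j)$ with $\tilde\phi_j = \phi \| h(j)^{\top}\nu\|_2$ and $\tilde\nu_j = h(j)^{\top}\nu / \|h(j)^{\top}\nu\|_2$ (treating the degenerate case $h(j)^{\top}\nu = 0$ separately, where the factor equals $1$). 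The sub-Gaussianity of $\omega(k-j)$ with parameter $\zeta_\omega$ then gives
\begin{equation*}
   \mE\!\left[\rme^{\phi\, \nu^{\top} h(j)\,\omega(k-j)}\right] \leq \exp\!\left(\tfrac{1}{2}\zeta_\omega^2 \phi^2 \|h(j)^{\top}\nu\|_2^2\right).
\end{equation*}
Multiplying over $j$ and using $\|h(j)^{\top}\nu\|_2^2 \leq \|h(j)\|_F^2$ (or keeping the tighter bound with $\nu$ and maximizing afterwards) yields
\begin{equation*}
   \mE\!\left[\rme^{\phi\, \nu^{\top} r(k)}\right] \leq \exp\!\left(\tfrac{1}{2}\zeta_\omega^2 \phi^2 \sum_{j=1}^{\infty} \|h(j)\|_F^2\right).
\end{equation*}
The final step is to recognize that in the discrete-time setting the $\Ht$-norm admits the Markov-parameter representation $\|\T_{\omega r}\|_{\Ht}^{2} = \sum_{j=1}^{\infty} \|CA^{j-1}B\|_F^{2} = \sum_{j=1}^{\infty} \|h(j)\|_F^{2}$, which identifies the exponent as $\tfrac{1}{2}\zeta_r^{2}\phi^{2}$ with $\zeta_r = \zeta_\omega \|\T_{\omega r}\|_{\Ht}$, concluding that $r(k)$ is sub-Gaussian with the claimed parameter.

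I expect the main obstacle to be largely bookkeeping rather than conceptual: one has to be careful about (i) absolute convergence of the infinite sum, which requires the stability of $A$ that is implicit throughout the paper, (ii) passing the MGF inside the infinite product (handled by monotone/dominated convergence on truncations), and (iii) the vector-valued case where the bound $\|h(j)^{\top}\nu\|_2^2 \leq \|h(j)\|_F^2$ might give a parameter that is slightly loose per direction but matches the global $\Ht$-norm needed for the downstream use in Theorem~\ref{thm:Prob certificate}. In the scalar-output case used by the filters $\Fij$ in the main text, $h(j)$ is a row vector and $\|h(j)^{\top}\nu\|_2 = \|h(j)\|_2$ identically, so this tightness issue does not arise.
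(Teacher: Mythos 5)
Your proposal is correct and follows essentially the same route as the paper's proof: write $r(k)$ as a linear combination of the iid inputs through the Markov parameters $CA^{m}B$, factorize the moment generating function by independence, apply the sub-Gaussian bound termwise, and identify the resulting sum of squared (Frobenius) norms of the Markov parameters with $\| \T_{\omega r} \|_{\Ht}^2$. The only cosmetic difference is that the paper works with the finite sum $r(k)-\mE[r(k)] = C\sum_{m=0}^{k-1}A^{k-1-m}B\,\omega(m)$ from the initial time (so the $\Ht$-norm appears as an upper bound on a finite partial sum, sidestepping your infinite-product convergence bookkeeping), whereas you use the steady-state convolution; both yield the same parameter $\zeta_r = \|\T_{\omega r}\|_{\Ht}\zeta_\omega$.
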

\begin{proof}
From the linear system theory we know that~$r(k) - \mE[r(k)] = C\sum_{m=0}^{k-1} A^{k-1-m} B \omega(m)$. Then, for any constant~$\phi\in\mathbb{R}$ and a unit vector~$\nu$ with an appropriate dimension, we have
\begin{align}\label{eq:r-Er}
   \mE \left[\rme^{\phi \nu^{\top} (r(k) - \mE[r(k)])} \right] 
    = \mE\left[\rme^{\phi \nu^{\top} C\sum_{m=0}^{k-1} A^{k-1-m} B \omega(m)} \right] 
    = \prod_{m=0}^{k-1} \mE\left[\rme^{\phi \nu^{\top} C A^{k-1-m} B \omega(m)} \right],
\end{align}
Since~$\omega$ is sub-Gaussian, according to Lemma~\ref{lem:subG}, it holds that
\begin{align*}
    \mE \left[\rme^{\phi \nu^{\top} C A^{k-1-m} B \omega(m)}\right] \leq \rme^{\phi^2 \|\nu^{\top}\|^2_2 \|C A^{k-1-m}B \|_2^2 \zeta_{\omega}^2/2}.
\end{align*}
Recall that~$\|v\|_2=1$. Thus, equality~\eqref{eq:r-Er} satisfies
\begin{small}
\begin{align*}
    \mE \left[\rme^{\phi \nu^{\top}(r(k) - \mE[r(k)])} \right] 
    \leq \prod_{m=0}^{k-1} \rme^{\phi^2 \|C A^{k-1-m} B\|_2^2 \zeta_{\omega}^2/2} 
    =  \rme^{\phi^2 \sum_{m=0}^{k-1} \|C A^{k-1-m} B \|_2^2 \zeta_{\omega}^2/2}.
\end{align*}
\end{small}
By matrix norm definitions, we know~$\|A\|_2^2 \leq \textup{Trace}(A^{\top}A)$ for all real-valued matrix~$A$, and thus
\begin{align*}
    \mE \left[\rme^{\phi \nu^{\top}(r(k) - \mE[r(k)])} \right] 
    &\leq \rme^{\phi^2 \sum_{m=0}^{k-1} \textup{Trace} \left( C A^{k-1-m} B B^{\top} {A^{\top}}^{k-1-m} C^{\top} \right) \zeta_{\omega}^2/2} 
    \leq \rme^{\phi^2 \| \T_{\omega r} \|^2_{\Ht} \zeta_{\omega}^2/2},
\end{align*}
 where the last inequality follows from Parseval's Theorem and the~$\Ht$ norm definition.
\end{proof}
\begin{proof}[Proof of Theorem~\ref{thm:Prob certificate}]
The main idea builds on the probabilistic relation between the concentration of a random variable and its expectation. Since the noise~$\omega$ is sub-Gaussian, according to Lemma~\ref{lem:linear subG}, the matched residual~$r_{ij}$ is also sub-Gaussian with the parameter~$\zeta_{r_{ij}} = \left\| \T^{\Sij}_{\omega r_{ij}} \right\|_{\Ht}\zeta_{\omega} < \sqrt{\bar{\eta}_i}\zeta_{\omega}$. 
We first show that the performance guarantee~\eqref{eq:Prob guarantee} holds when~$|\mE[r_{ij}(k)]| \leq \mu\sqrt{\bar{\eta}_i}$. According to~\eqref{eq:threshold}, we have
\begin{align*}
    \varepsilon_i-\left|\mE[r_{ij}(k)]\right| \geq \varepsilon_i-\mu\sqrt{\bar{\eta}_i} = \zeta_{\omega}\sqrt{2 \ln{(2/ \beta)} \bar{\eta}_i}.
\end{align*}
Since it also holds that~$\left|r_{ij}(k)\right|-\left|\mE[r_{ij}(k)]\right| \leq \left| r_{ij}(k) - \mE[r_{ij}(k)] \right|$, we have 
\begin{align*}
    &\mPr \left[ \left|r_{ij}(k)\right| \leq \varepsilon_i \Big|
   \begin{bmatrix} \hsigma(k)  \\ \sigma(k) \end{bmatrix}
   =\begin{bmatrix} i  \\ j \end{bmatrix},  k \geq t_s \right] \notag\\
    =&\mPr \left[\left|r_{ij}(k)\right|-\left|\mE[r_{ij}(k)]\right| \leq \varepsilon_i -\left|\mE[r_{ij}(k)]\right| \Big|
   \begin{bmatrix} \hsigma(k)  \\ \sigma(k) \end{bmatrix}
   =\begin{bmatrix} i  \\ j \end{bmatrix},  k \geq t_s \right] \notag\\
    \geq  &\mPr \left[ \left| r_{ij}(k) - \mE[r_{ij}(k)] \right| \leq \zeta_{\omega}\sqrt{2 \ln{(2/\beta)}\bar{\eta}_i} \Big|
   \begin{bmatrix} \hsigma(k)  \\ \sigma(k) \end{bmatrix}
   =\begin{bmatrix} i  \\ j \end{bmatrix},  k \geq t_s \right] \notag\\
    \geq & 1- 2\rme^{ {-2\ln{(2/\beta)} \bar{\eta}_i\zeta^2_{\omega} \big/ \left(2\| \T_{\omega r_{ij}}^{\Sij} \|^2_{\Ht}\zeta^2_{\omega}\right)}} \geq 1- \beta,
\end{align*}
where the concentration inequality~\eqref{eq:Concentra Ine} in Lemma~\ref{lem:subG} is used to get the second inequality. This completes the first part of the proof.

Next, we show that~$|\mE[r_{ij}(k)]| \leq \mu\sqrt{\bar{\eta}_i}$ when~$k \geq t_s+T_{ij}$.
Let us incorporate the initial state~$x(t_s)$ into the expression of~$\mE[r_{ij}(k)]$, where~$x(t_s)$ is viewed as an input to the system that only has a nonzero value at~$t_s$.
According to the closed-loop dynamics~\eqref{eq:closed-loop dynamics}, for~$k = t_s+\Delta k$ where~$\Delta k \in [0,t^{\rm iso}_s)$, we have 
\begin{align}\label{eq:closed-loop dynamics with x_ts}
x(k+1) &= A^{cl}_{ij}x(k)+E_jd(k) +(W_j+B_jK_iD_j)\omega(k) + x(t_s), \notag\\
 y(k)  &= C_jx(k) + D_j\omega(k).
\end{align} 
We reformulate~\eqref{eq:closed-loop dynamics with x_ts} into the DAE format, which is
\begin{equation}\label{eq:DAE with x_ts}
\begin{bmatrix}
-\mfq I+A^{cl}_{ij}  &E_j  &I \\
C_j              &0    &0    
\end{bmatrix}
\begin{bmatrix}
x \\ d \\x(t_s)
\end{bmatrix} + 
L(\mfq) [y] + G_{ij}(\mfq) [\omega] = 0.
\end{equation}
Multiplying the left hand-side of~\eqref{eq:DAE with x_ts} by~$a^{-1}(\mfq)N_{ij}(\mfq)$ leads to
\begin{equation}\label{eq:r_ij(k)}
\begin{split}
r_{ij}  = \frac{N_{ij}(\mfq)L(\mfq)}{a(\mfq)} [y] 
          = -\frac{N_{ij}(\mfq)}{a(\mfq)}\begin{bmatrix}
-\mfq I+A^{cl}_{ij}  &E_j  &I \\
C_j              &0    &0    
\end{bmatrix}
\begin{bmatrix}
x \\ d \\x(t_s)
\end{bmatrix} 
 -\frac{N_{ij}(\mfq)G_{ij}(\mfq)}{a(\mfq)} [\omega].
\end{split}
\end{equation}   
Recall that~$N_{ij}(\mfq)H_{ij}(\mfq) = 0$ in Theorem~\ref{thm:minimization problem}. By substituting~$N_{ij}(\mfq) = \left[\hat{N}_{ij}(\mfq) ~\check{N}_{ij}(\mfq)\right]$ into~\eqref{eq:r_ij(k)}, we have
\begin{align*}
r_{ij} = -\frac{\hat{N}_{ij}(\mfq)}{a(\mfq)}x(t_s) -\frac{N_{ij}(\mfq)G_{ij}(\mfq)}{a(\mfq)} [\omega].
\end{align*}
Hence, the expected value of~$r_{ij}$ is
\begin{align*}
    \mE[r_{ij}] = -a^{-1}(\mfq)\hat{N}_{ij}(\mfq)\mE[x(t_s)].
\end{align*}

To compute~$T_{ij}$, following the idea of~\cite[Lemma 3.4]{van2022multiple}, we transform~$-a^{-1}(\mfq)\hat{N}_{ij}(\mfq)$ to its Jordan canonical form denoted by~$(\mathsf{A},\mathsf{B}_{ij},\mathsf{C})$. The transfer function~$-a^{-1}(\mfq)\hat{N}_{ij}(\mfq)$ can be expanded as
\begin{align*}
-\frac{\hat{N}_{ij}(\mfq)}{a(\mfq)}  = 
&\left[-\frac{\sum^{d_N}_{m=0}\hat{N}_{ij,m}(1)\mfq^{m}}{a(\mfq)}, \dots,   -\frac{\sum^{d_N}_{m=0}\hat{N}_{ij,m}(n_x)\mfq^{m}}{a(\mfq)} \right],
\end{align*}
Recall that $a(\mfq) = \prod^{d_N+1}_{\ell=1}(\mfq-\lambda_\ell)$. The factorization of the~$h$-th element of~$-a^{-1}(\mfq)\hat{N}_{ij}(\mfq)$ is
\begin{equation*}
-\frac{\sum^{d_N}_{m=0}\hat{N}_{ij,m}(h)\mfq^{m}}{a(\mfq)}
= \sum_{\ell=1}^{d_N+1} \frac{b_{ij,\ell h}}{\mfq-\lambda_{\ell}},
\end{equation*} 
where~$b_{ij,\ell h} = -\frac{\sum^{d_N}_{m=0}\hat{N}_{ij,m}(h) \lambda_{\ell}^{m}}{\prod_{\tilde{\ell} \neq \ell} (\lambda_{\tilde{\ell}} - \lambda_\ell)}$. The Jordan canonical form of~$-a^{-1}(\mfq)\sum^{d_N}_{m=0}\hat{N}_{ij,m}(h)\mfq^{m} $ is denoted by~$(\mathsf{A}_{h},\mathsf{B}_{ij,h},\mathsf{C}_{h})$, where
\begin{align*}
    &\mathsf{A}_{h} = \text{diag}([\lambda_1,\dots,\lambda_{d_N+1}]), 
    ~\mathsf{B}_{ij,h} = [b_{ij,1h},\dots,b_{ij,(d_N+1)h}]^{\top},
    ~\mathsf{C}_{h} = [1,\dots,1].
\end{align*}
According to the superposition property of linear systems, we have
$\mathsf{A} = \text{diag}([\lambda_1,\dots,\lambda_{d_N+1}]), 
~\mathsf{B}_{ij} = [\mathsf{B}_{ij,1},\dots,\mathsf{B}_{ij,n_x}],
~\mathsf{C} = [1,\dots,1]$.
With the state-space description,~$\mE[r_{ij}(k)]$ can be written as
\begin{align*}
\mE[r_{ij}(k)] &= \mathsf{C} \mathsf{A}^{\Delta k} \mE[\bar{x}_{ij}(t_s)] 
+ \mathsf{C}\sum_{m=0}^{\Delta k -1} \mathsf{A}^{\Delta k-1-m} \mathsf{B}_{ij} \mE[x(t_s)] \\
 &=\mathsf{C} \mathsf{A}^{\Delta k} \mE[\bar{x}_{ij}(t_s)] + \mathsf{C} \mathsf{A}^{\Delta k-1} \mathsf{B}_{ij} \mE[x(t_s)]
\end{align*}  
where~$\bar{x}_{ij}(t_s)$ is the filter state.
Since~$\mathsf{A}$ is a diagonal matrix, we have~$\|\mathsf{A} \|_2 = \lambda_{\max}$. Based on the triangle property of norms,~$\vert \mE[r_{ij}(k)] \vert$ is bounded by
\begin{align*} 
| \mE[r_{ij}(k)] | 
&\leq \|\mathsf{C}\|_2 \| \mathsf{A} \|_2^{\Delta k} \|\mE[\bar{x}_{ij}(t_s)]\|_2
+ \|\mathsf{C}\|_2 \|\mathsf{A} \|_2^{\Delta k -1} \|\mathsf{B}_{ij}\|_2 \|\mE[x(t_s)]\|_2 \\
&\leq \sqrt{d_N+1} (1 + \lambda_{\max}^{-1} \|\mathsf{B}_{ij}\|_2) \|\mE[\X_{ij}(t_s)]\|_2 \lambda_{\max}^{\Delta k} \\
&= \psi_{ij}\left(\Fij,\X_{ij}(t_s)\right)\lambda_{\max}^{\Delta k}.
\end{align*}
By setting~$\mu\sqrt{\bar{\eta}_i} \geq \psi_{ij}\left(\Fij,\X_{ij}(t_s)\right)\lambda_{\max}^{\Delta k}$, we arrive at
\begin{align*}
 \Delta k  \geq T_{ij} = \left\lceil \log_{\lambda_{\max}} \frac{\mu\sqrt{\bar{\eta}_{i}}}{\psi_{ij}\left(\Fij,\X_{ij}(t_s)\right)}  \right\rceil .
\end{align*}
That completes the proof.
\end{proof}

\section{Illustrative Examples}\label{sec:simulation}

In this section, we consider a numerical example and a practical application on building radiant systems to illustrate the effectiveness of the proposed diagnosis scheme. 

\subsection{Numerical results}
Consider a switched system with three linear subsystems. The system matrices are
\begin{equation*}
\begin{split}
&A_1 = \begin{bmatrix} 0.5 &0 \\ 0 &-0.4\end{bmatrix}, A_2 = \begin{bmatrix} 0.5 &-0.2\\ 0 &-0.4\end{bmatrix}, A_3 = \begin{bmatrix} -0.5 &0 \\ 0.1 &-0.4 \end{bmatrix}, B_1 = \begin{bmatrix} 0 \\ 1 \end{bmatrix},\\
&B_2 = \begin{bmatrix} 1 \\ 1\end{bmatrix},
B_3 = \begin{bmatrix} 1 \\ 0 \end{bmatrix}, E_1=E_2=E_3 = \begin{bmatrix} 1 \\1\end{bmatrix}, W_1=W_2=W_3=0, \\
&C_1 = C_3 =\begin{bmatrix} 1 &0 \\ 0 &1\end{bmatrix}, C_2 = \begin{bmatrix} 1 &0\\ 0 &0\end{bmatrix}, D_1=D_2=D_3 = \begin{bmatrix} 0.01 &0 \\ 0.01 &-0.01 \end{bmatrix}.
\end{split}
\end{equation*}
The controller gains are~$K_1=[-0.0395 ~-0.0741]$,~$K_2=[-0.0648 ~0.0510]$, and~$K_3 = [-0.0420 ~0.0326]$.
We set the degree of the filters~$d_N = 1$, the denominator~$a(\mfq)=(\mfq+0.1)(\mfq+0.2)$. 
The reference signal is set as~$\bar{d} = 0.5$. The parameter of the iid sub-Gaussian noise is~$1$.
The filters are constructed by using the approach proposed in Theorem~\ref{thm:minimization problem} and Proposition~\ref{prop:LMI Appro}. 
We solve the optimization problems by YALMIP toolbox~\cite{2004YALMIP}. 
The thresholds are computed according to \eqref{eq:threshold} where the reliability level~$\beta = 0.05$ and~$\mu = 0.5$. Thus, the thresholds are~$\varepsilon_{1} = 0.18,~\varepsilon_{2} = 0.16,~\varepsilon_{3} = 0.12$. The waiting time~$\tau_i$ for~$i\in\{1,2,3\}$ computed by~\eqref{eq:tauj} are~$\tau_1=7, \tau_2=6, \tau_3=7$.
To cover all the scenarios, we set the switching sequence as:~$1 \rightarrow 2 \rightarrow 3 \rightarrow 1 \rightarrow 3 \rightarrow 2 \rightarrow 1$.

\begin{figure}[t]
\centering
\subfigure[$|r_{1h}|$ with~$\mathsf{M}^{50}_{12}$] {
\includegraphics[scale=0.48]{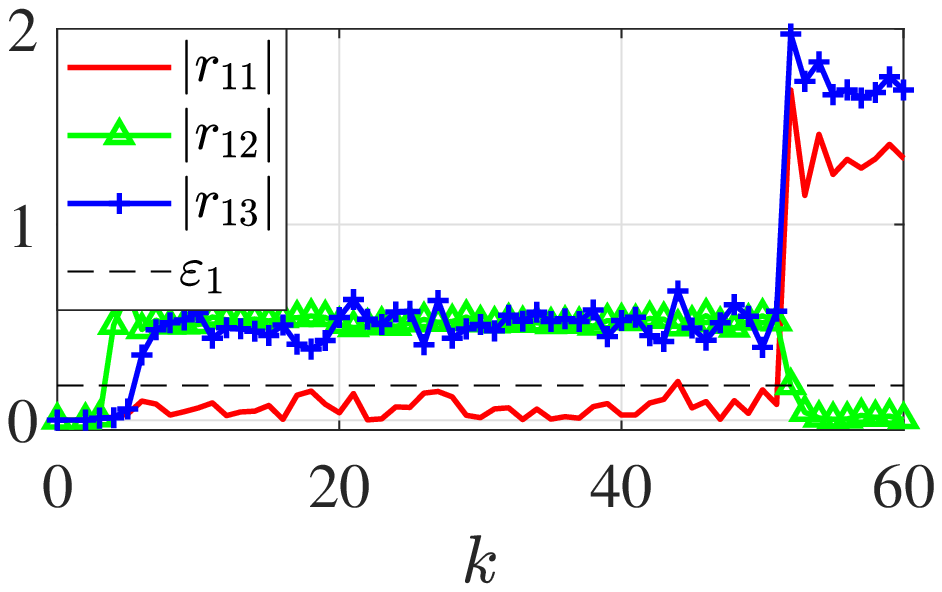}\label{fig:residuals.a}
}
\hspace{-0.5cm}
\subfigure[ $|r_{1h}|$ with~$\mathsf{M}^{200}_{13}$]{
\includegraphics[scale=0.48]{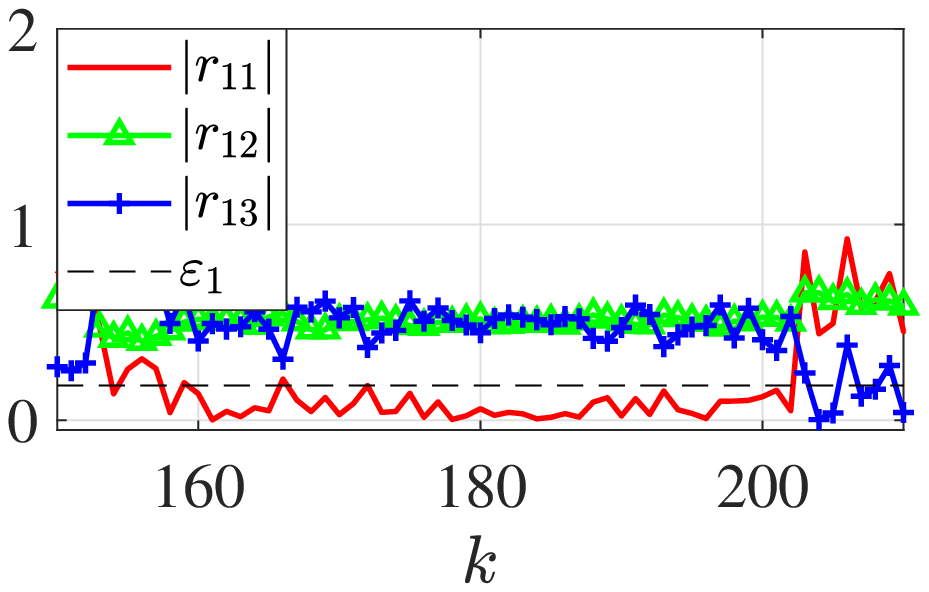}
}
\hspace{-0.5cm}
\subfigure[ $|r_{2h}|$ with~$\mathsf{M}^{300}_{21}$]{
\includegraphics[scale=0.48]{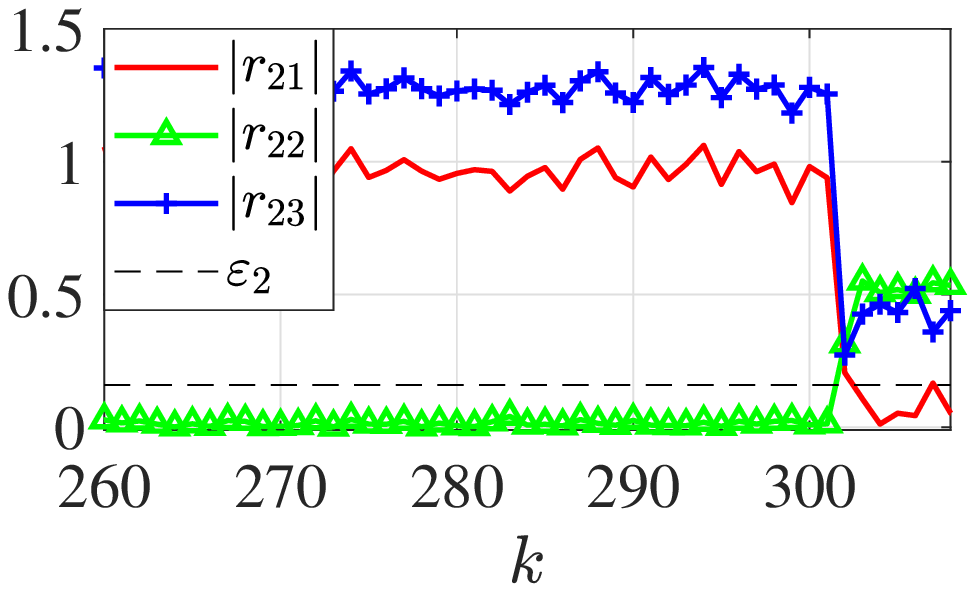}\label{fig:residuals.c}
}
\hspace{-0.5cm}
\subfigure[$|r_{2h}|$ with~$\mathsf{M}^{100}_{23}$]{
\includegraphics[scale=0.48]{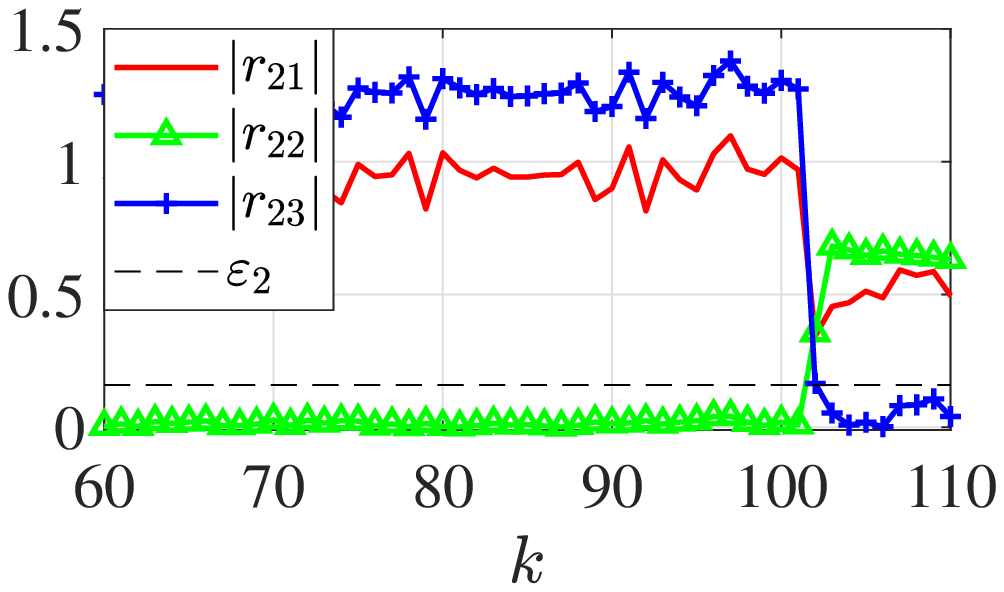}\label{fig:residuals.d}
}
\hspace{-0.5cm}
\subfigure[$|r_{3h}|$ with~$\mathsf{M}^{150}_{31}$]{
\includegraphics[scale=0.48]{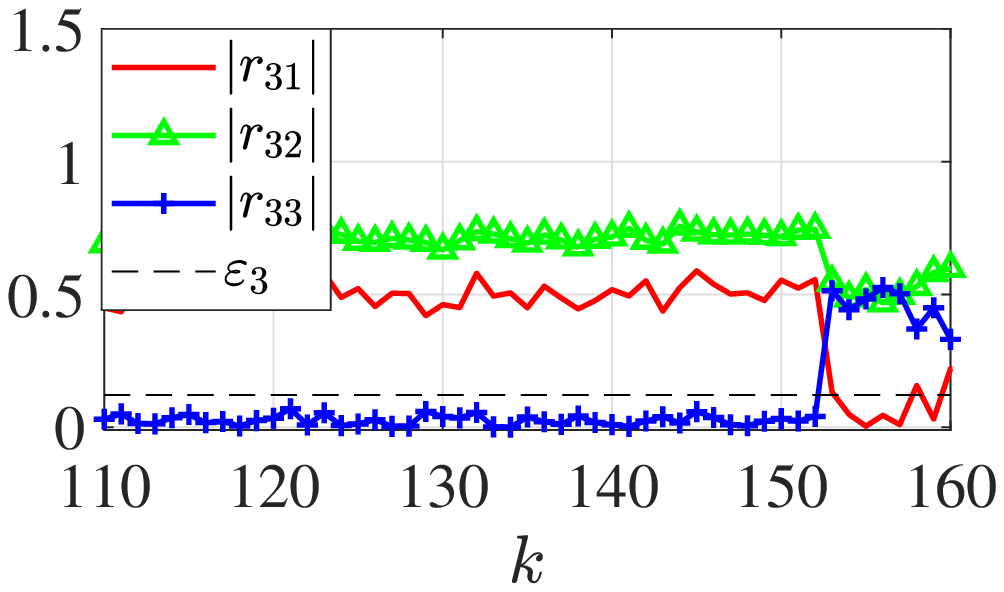}
}
\hspace{-0.5cm}
\subfigure[$|r_{3h}|$ with~$\mathsf{M}^{250}_{32}$]{
\includegraphics[scale=0.48]{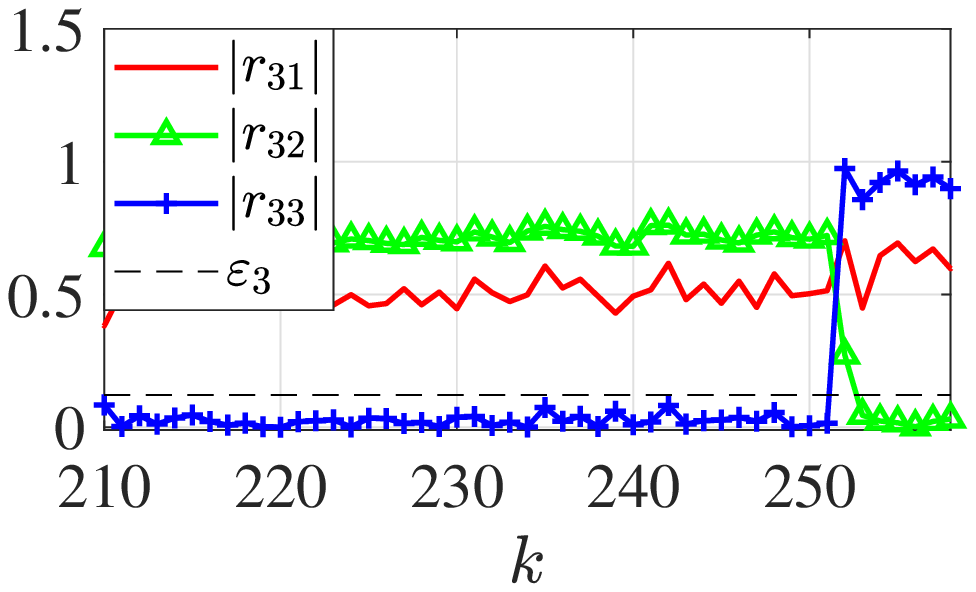}
}
\vspace{-0.2cm}
\caption{Residuals behavior under different scenarios: Let $\mathsf{M}^{k}_{ij}$ stand for a system transition from $i$ to $j$ at time $k$.}\label{fig:residuals}
\end{figure}   

\begin{figure}[t]
\center
\includegraphics[scale=0.6]{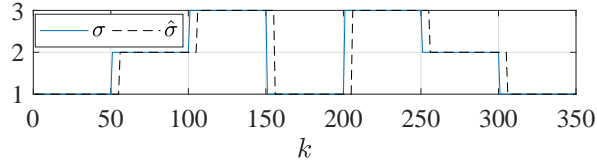}
\vspace{-0.2cm}
\caption{Diagnosis result of the whole process}\label{fig:diagnosis result}
\end{figure} 

\begin{figure} [h]
\centering
\includegraphics[scale=0.6]{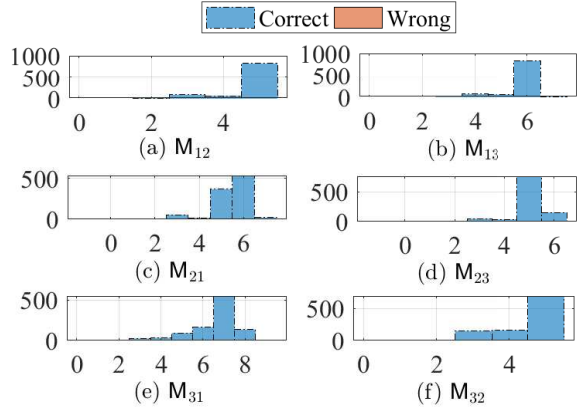}
\vspace{-0.2cm}
\caption{Distribution of the diagnosis time for each scenario} \label{fig:Diagnosis time distribution}
\end{figure}

Figure~\ref{fig:residuals} depicts the residuals behavior under different scenarios. 
Here, we only analyze~$r_{1h}$ for~$h \in \{1,2,3\}$ with the transition~$\mathsf{M}^{50}_{12}$ shown in Figure~\ref{fig:residuals.a}, because the rest are similar. 
Since the initial status of the closed-loop system is~$\mathsf{S}_{11}$, the absolute value~$|r_{11}(k)|$ remains below~$\varepsilon_{1}$ until transition happens at~$k=50$. 
The other two residuals $r_{12}$ and~$r_{13}$ first reach their corresponding steady values and then oscillate around the steady values because of the noise.
The matched residual~$r_{11}$ and the unmatched residuals~$r_{12}$ and~$r_{13}$ are separated. 
After the transition~$\mathsf{M}^{50}_{12}$ happens at~$k=50$,~$|r_{11}(k)|$ exceeds the threshold~$\varepsilon_{1}$ immediately such that the switching is detected. 
Then,~$|r_{12}(k)|$ reaches~$\varepsilon_{1}$ at about~$k = 53$ while the other two residuals are above~$\varepsilon_{1}$. As a result, active mode 2 is determined.
Figure~\ref{fig:diagnosis result} shows the diagnosis result of the whole process, where the switching signal is correctly estimated.
 
We execute the experiment 1000 times for each switching scenario to obtain the distributions of the diagnosis time and the probability of wrong detection. The results are shown in Figure~\ref{fig:Diagnosis time distribution}.
The average diagnosis time (ADT) and the wrong detection probability (WDP) are presented in Table~\ref{tab:transition}. 
We compute the estimated matched time~$T_{ij}$ based on~\eqref{eq:T_ij}.
From Table~\ref{tab:transition}, the estimated matched time estimates the average diagnosis time well, and the wrong detection probability is low.  
\begin{table}[h]
\small
\centering
\caption{Average diagnosis time and wrong detection probability when~$\mu=0.5$ and~$\beta=0.05$}\label{tab:transition}
{\begin{tabular}{l c c c c c c}
\toprule
 Transition &$\mathsf{M}_{12}$ &$\mathsf{M}_{13}$ &$\mathsf{M}_{21}$ &$\mathsf{M}_{23}$ &$\mathsf{M}_{31}$ &$\mathsf{M}_{32}$ \\
\midrule
  ADT         &5       &6      &6    &6       &7     &5 \\
  $T_{ij}$    &5       &7      &7    &7       &7      &5 \\
  WDP  &0   &0      &0.002    &0.003   &0  &0 \\
\bottomrule
\end{tabular}} 
\end{table}

\subsection{Building radiant systems}
In this section, a building radiant system is considered. 
We adopt the example from~\cite{2018Guaranteed}, where the building with four rooms of the same size is equipped with a radiant system with two pumps. 
Moreover, we compare the model invalidation approach proposed in~\cite{2018Guaranteed} with our approach.

\subsubsection{System model description}
The radiant system can be modeled by the following equations
\begin{small}
\begin{align*}
&C_{c,1}\dot{T}_{c,1} = K_{c,1}(T_1-T_{c,1}) + K_{c,3}(T_3-T_{c,1})+ K_{w,1}(T_{w,1} - T_{c,1}),\\
&C_{c,2}\dot{T}_{c,2} = K_{c,2}(T_2-T_{c,2}) + K_{c,4}(T_4-T_{c,2})+ K_{w,2}(T_{w,2} - T_{c,2}),\\
&C_1\dot{T}_1 = K_{c,1}(T_{c,1}-T_1) + K_1(T_a-T_1)+K_{12}(T_2-T_1) +K_{13}(T_3-T_1),\\ 
&C_2\dot{T}_2 = K_{c,2}(T_{c,2}-T_2) + K_2(T_a-T_2)+K_{12}(T_1-T_2)+K_{24}(T_4-T_2),\\
&C_3\dot{T}_3 = K_{c,1}(T_{c,1}-T_3) + K_3(T_a-T_3)+K_{13}(T_1-T_3)+K_{34}(T_4-T_3),\\
&C_2\dot{T}_4 = K_{c,2}(T_{c,2}-T_4) + K_4(T_a-T_4)+K_{24}(T_2-T_4)+K_{34}(T_3-T_4),
\end{align*}
\end{small}
where the temperatures of two cores in the radiant system are denoted by~$T_{c,i}$ for~$i \in \{1,2\}$. 
The temperature of the supply water is denoted by~$T_{w,i}$.
The ambient air temperature is denoted by~$T_a$. The air temperature of room~$i$ for~$i \in \{1,2,3,4\}$ is denoted by~$T_i$. The thermal conductance between~$T_i$ and~$T_a$ is denoted by~$K_{i}$. The thermal conductance between~$T_{c,i}$ and~$T_i$ is denoted by~$K_{c,i}$. The thermal conductance between room~$i$ and~$j$ is denoted by~$K_{ij}$. 
The piping thermal conductance between~$T_{c,i}$ and~$T_{w,i}$ is denoted by~$K_{w,i}$.
The thermal capacitance of room~$i$ and core~$i$ is denoted by~$C_i$ and~$C_{c,i}$, resp.
Assume that the constant flow of pumps is known. Each pump supplies water to the water pipe and is connected to a valve to adjust the constant flow. 
The system state consists of the temperatures of the four rooms and the two cores.
Suppose both pumps are on. 
The values of the parameters are the same as that in~\cite{2018Guaranteed}. 
The above equations can be written into the state-space form
\begin{align}\label{eq:radiant system}
\dot{x}_T &= A_{rad,1}x_T+E_{rad,1}T_d, \notag \\
y &= C_{rad,1}x_T+\omega,
\end{align}
where~$x_T = [T_{c,1},T_{c,2},T_1,T_2,T_3,T_4]^{\top}$,~$T_d =[T_{w,1},T_{w,2},T_a]^{\top}$ is the constant input (or reference signal). Matrices~$A_{rad,1}$ and~$E_{rad,1}$ are obtained from the above equations. The matrix~$C_{rad,1} = \text{diag}([0,0,1,1,1,1])$ indicates the measured temperatures. 
Assume that there is an uncertainty~$\nu$ in~$T_a$ due to small changes (i.e.,~$T_a = 10 + \nu$ where~$\nu$ is Gaussian noise with mean~$0$ and variance 0.1).
The measurement noise denoted by~$\omega$ is Gaussian noise with mean~$0$ and variance 0.01.
The discrete-time model of the radiant system~\eqref{eq:radiant system} is obtained with a sampling time of 5 min. Let~$(A^d_{rad,1},E^d_{rad,1},C_{rad,1})$ represents the fault-free discrete-time model of the system. 

\subsubsection{Faulty modes}
The normal functions of the valves and temperature measurement sensors are impaired in the faulty modes. 
Specifically, when there is a fault in the valve, we assume that the valve is stuck in the middle and does not respond to commands. Since the fault cuts the heat transfer in half, the fault is modeled with a change in the heat conductance parameter, i.e.,~$K_{w,1} \rightarrow K_{w,1}/2$ in~$A_{rad,1}$ and~$E_{rad,1}$. 
The sensor failures result in inaccurate measurements of the temperature. We change the corresponding entry in~$C_{rad,1}$ to model the sensor fault, i.e.,~$1 \rightarrow 0.9$ .
Here, two faulty modes are considered.
The first faulty mode is denoted by~$(A^d_{rad,2},E^d_{rad,2},C_{rad,2})$, where faults occur in the second pump and the sensor measuring~$T_{1}$. As a result,~$K_{w,2}$ decreases to~$K_{w,2}/2$ and~$C_{rad,2}=\text{diag}([0,0,0.9,1,1,1])$. 
The second faulty mode is denoted by~$(A^d_{rad,3},E^d_{rad,3},C_{rad,3})$, where just one fault occurs in the first pump.  
Note that the second faulty mode is more incipient than the first one 
because the outputs do not change dramatically.
The matched residual of~$(A^d_{rad,i},E^d_{rad,i},C_{rad,i})$ is defined as~$r_{i}$ for~$i \in \{1,2,3\}$.

\subsubsection{Filter design and model invalidation approach}
Note that there is no control signal in the radiant system~\eqref{eq:radiant system}. Thus, we only need to design three filters corresponding to the three modes. The degree of the filters is set as~$d_N=3$. The filters are then constructed based on Theorem~\ref{thm:minimization problem} and Proposition~\ref{prop:LMI Appro}.
The idea of the model invalidation approach proposed in~\cite{2018Guaranteed} is that, given the input and output data, detect the transitions by checking the feasibility of a mixed-integer linear programming problem. 
Since the example we adopt here has only one healthy mode, the MILP problem degenerates into the following linear programming problem.
\begin{small}
\begin{align}\label{eq:invalidation approach}
\text{Find} &~{\bf{x}}(k),~\boldsymbol{\nu}(k),~\boldsymbol{\omega}(k),~\forall k \in \{0,1,\dots,T-1\} \notag \\
\text{s.t.} &\left\{
\begin{array}{l}
{\bf{x}}(k+1)-A_{rad,1}{\bf{x}}(k)-E_{rad,1}(T_d +[0,0,\boldsymbol{\nu}(k)]^{\top}) = 0,\\
y(k)-C_{rad,1}{\bf{x}}(k)-\boldsymbol{\omega}(k)=0,\\
X_l \leq {\bf{x}}(k) \leq X_u, ~V_l \leq \boldsymbol{\nu}(k) \leq V_u,\\
W_l \leq \boldsymbol{\omega}(k) \leq W_u.
\end{array} \right.
\end{align} 
\end{small}
where the ranges of~${\bf{x}}(k)$,~$\nu(k)$ and~$\omega(k)$ are set as~$15 \leq \|{\bf{x}}\|_\infty \leq 19$,~$-0.3 \leq \|\boldsymbol{\nu}\|_\infty \leq 0.3$ and~$-0.03 \leq \|\boldsymbol{\omega}\|_\infty  \leq 0.03$, resp.
The positive integer~$T$ is derived from the definition \textit{T-Detectability} in~\cite{2018Guaranteed}. 
It represents the number of steps that a faulty model needs to generate an abnormal trajectory. 
We refer readers to~\cite{2018Guaranteed} for more details about the computation method of~$T$.  

\subsubsection{Results}
In the first case, we suppose the first faulty mode occurs at~$k = 20$. The diagnosis results are presented in Figure~\ref{fig:faultymode1}. Figure~\ref{fig:faultymode1}(a) shows the changes in the measured temperatures. 
The temperature~$T_1$ drops significantly due to sensor failure, and other measured temperatures also change slightly because of the fault in pump 2.  
Figure~\ref{fig:faultymode1}(b) shows the changes in the residuals and the feasibility of the invalidation problem~\eqref{eq:invalidation approach}.
One can see that~$r_1$ crosses the threshold at~$k=21$, and thus the fault is detected immediately after the faults happen. At~$k = 23$, the matched residual~$r_2$ reaches the threshold. Thus, the faulty mode is determined.
Meanwhile, the problem~\eqref{eq:invalidation approach} becomes infeasible at~$k=21$, which means the faults are detected by the model invalidation method as well.
In the second case, we suppose the second faulty mode happens at~$k=20$. 
One can see from Figure~\ref{fig:faultymode1}(c) that the changes in the measured temperatures are slight. This poses a challenge to the diagnosis task. Figure~\ref{fig:faultymode1}(d) shows the changes in the residuals and the feasibility of~\eqref{eq:invalidation approach}. Note that~$r_1$ crosses the threshold at~$k=22$. Hence, the fault is detected. 
Then, the matched residual~$r_3$ reaches the threshold at~$k=24$ such that the second faulty mode is determined. 
As a comparison, the invalidation problem is always feasible during the whole process, which means that the invalidation approach fails to detect the fault in the second case.  

\begin{figure}[t]
\centering
\subfigure[Faulty mode 1: Temperatures]{
\includegraphics[scale=0.6]{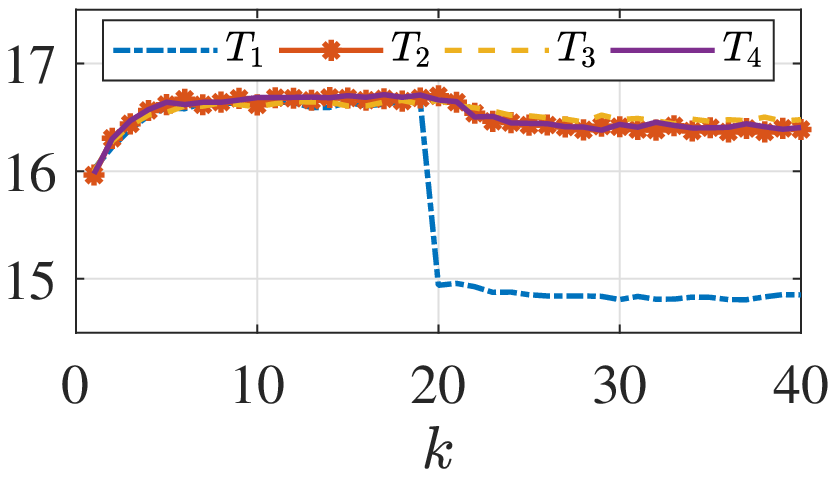}
}
\hspace{-2mm}
\subfigure[Faulty mode 1: Residuals]{
\includegraphics[scale=0.6]{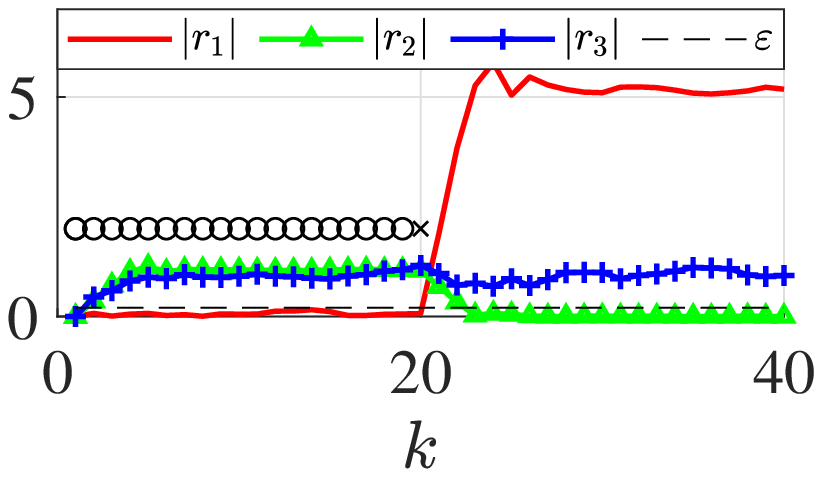}
}
\vspace{-0.25cm}

\centering
\subfigure[Faulty mode 2: Temperatures]{
\includegraphics[scale=0.6]{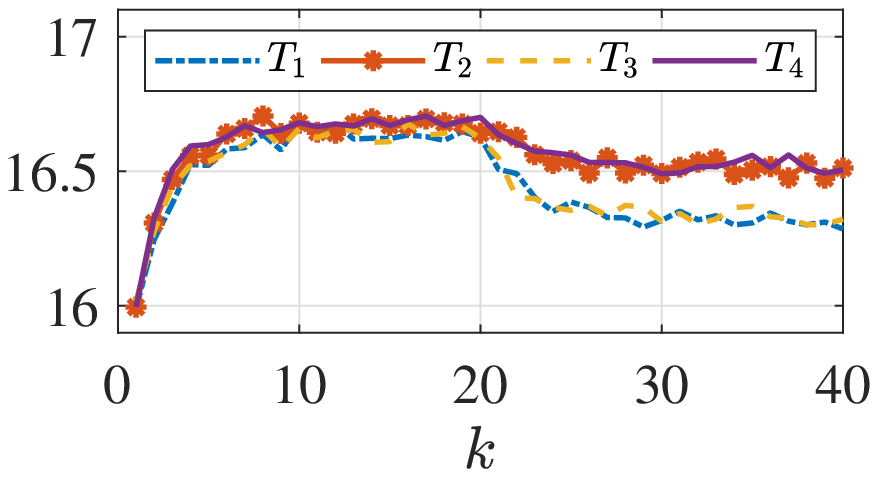}
}
\hspace{-2mm}
\subfigure[Faulty mode 2: Residuals]{
\includegraphics[scale=0.6]{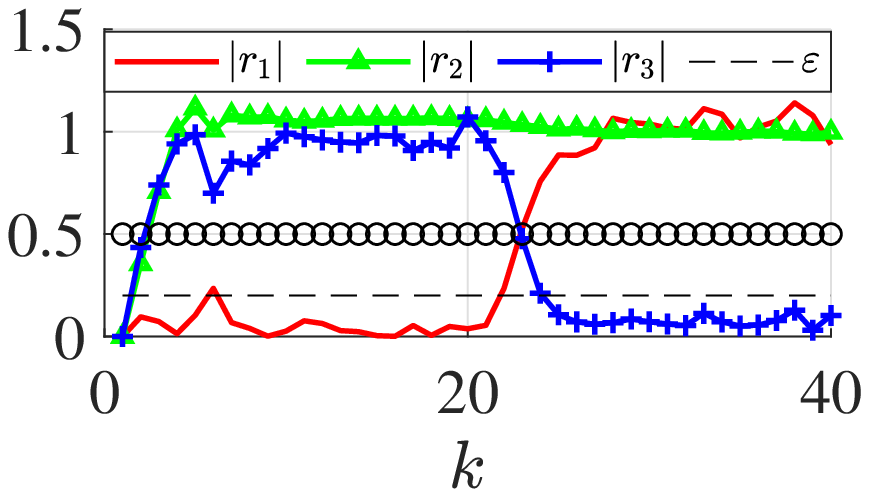}
}
\vspace{-0.25cm}
\caption{Simulation results with faulty modes happen at~$k=20$. The symbols~$\circ$ and $\times$ indicate the feasible and infeasible status of~\eqref{eq:invalidation approach}}\label{fig:faultymode1}
\end{figure}

\section{Conclusion and future directions}\label{sec:conslusion}
In this paper, we propose a diagnosis scheme to detect the active mode of discrete-time, switched affine systems in the presence of measurement noise and asynchronous switching. Based on an integration of residual generation and~$\Ht$-norm approaches, the design of an optimal bank of filters is formulated into a tractable optimization problem in which the noise contribution to the residuals is minimized. With the filters designed by the optimization problem, the diagnosis thresholds are determined which provide probabilistic false-alarm guarantees on the mode detection performance. Simulation results of a numerical example and a building radiant system show the effectiveness of the proposed approach. 
As future work, the first research direction is to combine the proposed approach with the active fault diagnosis method to deal with the unknown disturbance. One can design certain input sequences such that the unmatched residuals are separated from the matched residual with guaranteed probability. 
Note that the switching delay between the active mode and its corresponding controller is stochastic because of the stochastic noise. As a result, the second research direction would be focused on the impacts of the stochastic delay on the stability of asynchronously switched systems.

\bibliographystyle{elsarticle-num}       
\bibliography{mybibfile} 

\end{document}